\declaretheoremstyle[%
  spaceabove=-6pt,%
  spacebelow=6pt,%
  headfont=\normalfont\itshape,%
  postheadspace=1em,%
  qed=\qedsymbol%
]{mystyle} 
\newtheorem{thm}{Theorem}[section]
\newtheorem{prop}{Proposition}[section]
\newtheorem{defn}[thm]{Definition}
\newtheorem{remark}[prop]{Remark}
\newtheorem{example}[thm]{Example}
\newcommand{\FF}{{\mathbb{F}}}
\newcommand{\CC}{\mathcal{C}}
\title{ LCPs of Subspace Codes}
\author{Sanjit Bhowmick$^{1}$ 
\footnote{
$^{1}$Department of Electronics and Electrical Engineering, Indian Institute of Technology Guwahati,\\
 Assam, 781039, India. Email: sanjitbhowmick@rnd.iitg.ac.in \\
  }
}
\begin{document}
\maketitle
\begin{abstract} 
A subspace code is a nonempty collection of subspaces of the vector space $\mathbb{F}_q^{n}$.
A pair of linear codes is called a linear complementary pair (in short LCP) of codes if their intersection is trivial and the sum of their dimensions equals the dimension of the ambient space. 
In this paper, we introduce the concept of LCPs of subspace codes. We first provide a characterization of subspace codes that form an LCP. Furthermore, we present a sufficient condition for the existence of an LCP of subspace codes based on a complement function on a subspace code. In addition, we give several constructions of LCPs for subspace codes using various techniques and provide an application to insertion error correction.
\end{abstract}

\noindent\textbf{Keywords:} Subspace codes, LCP codes, Complement function on a subspace codes, Linear codes. 

\noindent\textbf{2020 AMS Classification Code: 12E20, 94B05, 94B60} 

\section{Introduction}\label{sec:intr}
Random network coding greatly improves how data moves through a network, making it faster and more reliable. Its main advantage is that the intermediate nodes can mix data packets together. However, this also creates a significant challenge, because errors do not remain isolated but instead spread through the network in a complex algebraic way. As a result,  traditional error-correcting codes, which are designed for fixed symbol sequences, do not perform well in this context. To address this problem, K{\"o}tter and Kschischang \cite{KR08} introduced a new concept called subspace codes. 
 Instead of treating messages as strings of symbols, each message is represented as a subspace of a vector space. The distance between two subspaces is defined in terms of the difference between the dimension of their sum and the dimension of their intersection. 
In network coding, a message is sent as a subspace $V$. During transmission, some parts of the message might be lost (erasures), or additional unwanted parts might be added (errors). A subspace code can fix up to $t$ errors and $\rho$ erasures, as long as $2(t + \rho) \leq d$, where $d$ is the code’s distance.

On the other hand, the notion of Linear Complementary Dual (LCD) codes was first introduced by Massey in 1992 \cite{Mas92}. He explored their basic algebraic properties, gave concrete examples, and showed that they could be optimal for certain communication channels, such as the two-user binary adder channel. For many years, LCD codes were mostly of theoretical interest. Their practical importance became clear when Bringer et al. \cite{BCC14} showed that they could help design secure cryptographic systems, particularly protecting against side-channel and fault injection attacks. As a result of this finding, researchers extensively explored the structure and construction of LCD codes. A major breakthrough came when Carlet and Guilley \cite{CG16} adapted classical linear code constructions to create LCD codes effectively. After that, many researchers studied these codes and found ways to use them in applications \cite{BDM26, CG16, CMTQ18, CMT18}.
The concept of linear complementary pairs (LCPs) of codes over finite fields was first introduced by Ngo et al. \cite{NBD15}, who also proposed a direct sum construction and studied its security against fault injection and side-channel attacks. Building on this idea, Carlet et al. \cite{CG18} investigated LCPs of constacyclic codes and showed that for an LCP $(C, D)$, the codes $C$ and $D^\perp$ are monomially equivalent. They further observed that this relationship extends to a particular class of quasi-cyclic codes, namely $2D$-cyclic codes. More recently, Bhowmick, Dalai, and Mesnager \cite{BDM23} extended the study of LCPs to algebraic geometry (AG) codes, presenting new constructions derived from algebraic curves, including elliptic curves. These developments highlight the growing interest in LCPs and their potential applications in both coding theory and cryptography. In \cite{BD25}, additive complementary pairs of codes were recently introduced, and their application to the two-user binary adder channel was also presented.

Crnkovi\'c and \v{S}vob \cite{CS25} introduced LCD subspace codes and demonstrated that these codes can be constructed using certain structural partitions of mathematical objects, specifically association schemes. By analyzing the properties of these schemes, they identified systematic methods to generate LCD subspace codes. More recently, Crnkovi\'c, Ishizuka, Kharaghani, Suda, and \v{S}vob \cite{CIKSS24} showed how to make both self-orthogonal and LCD subspace codes using several combinatorial designs, such as weighing matrices and linked symmetric designs, along with special partitions of these designs.
In a recent study, Liu et al. \cite{LHL25} developed a unified approach for analyzing $s$-Galois LCD subspace codes over finite fields. Their main result provides a necessary and sufficient condition for a subspace code to possess the $s$-Galois LCD property, thereby extending the foundational work of Crnković and Švob \cite{CS25}. In addition, the authors introduced three new construction techniques for building such codes. 

The contributions of this paper are summarized as follows:
\begin{itemize}
\item We introduce a general framework that relates the LCPs of subspace codes to the subspace distance between two subspaces.
\item Under suitable conditions, we establish an equivalence between the LCP of a subspace code and that of its dual code.
\item We derive a necessary and sufficient condition for the existence of an LCP of subspace codes in terms of their generator matrices.
\item We propose several constructions for obtaining LCPs of subspace codes.
\end{itemize}

This paper is organized as follows. In Section~\ref{pre-3}, we review the relevant background on subspace codes and linear codes, along with several necessary results concerning these codes. In Section~\ref{lcp-4}, we present a relation between the existence of an LCP for subspace codes and the subspace distance between two codes. Furthermore, we establish that the pair $\{\mathcal{C}, \mathcal{D}\}$ is an LCP of subspace codes if and only if  $\{\mathcal{C}^\perp, \mathcal{D}^\perp\}$ is LCP, provided that  for all $C_i\in \mathcal{C}$ and $D_j\in \mathcal{D}$, $\dim(C_i)+\dim(D_j)=n$, and vice-versa, where $\mathcal{C}^\perp=\{C_i^\perp~|~C_i\in\mathcal{C}\}$. Apart from this, we obtain two necessary and sufficient conditions for subspace codes to be LCP. In Section~\ref{m-5}, we show that $\{\mathbb{F}_q^{n\times m}(\mathcal{C}), \mathbb{F}_q^{n\times m}(\mathcal{D})\}$ is an LCP of subspace code if and only if $\{\mathcal{C}, \mathcal{D}\}$ is an LCP of subspace codes, where $\mathbb{F}_q^{n\times m}(\mathcal{C})=\{\mathbb{F}_q^{n\times m}(U)~|~U\in\mathcal{C}\}$. In addition, Section~\ref{con-6} provides constructions of LCP subspace codes using the $[u|u+v]$-construction, the $[u+v|\lambda u-\lambda v]$-construction (with $\lambda\in\mathbb{F}_q\setminus\{0\}$), and the $k$-spread of $\mathbb{F}_q^n$. Further, we
present an application to insertion error correction in Section~\ref{app-6}. Finally, the paper concludes in Section~\ref{con-7}.
\section{Preliminaries}\label{pre-3}

Let $ \mathbb{F}_q $ denote the finite field with $ q $ elements, where $ q $ is a prime power. Let $C$ be a $q$-ary linear code of dimension $k$ over the finite field $\mathbb{F}_{q}$. In fact, the code $C$ is a $k$-dimensional subspace of $\mathbb{F}_{q}^{n}$, and the elements of $C$ are called codewords. For two vectors $x, y \in \mathbb{F}_{q}^{n}$, the Hamming distance between them is defined as
$d(x,y) = |\{\, i : x_i \neq y_i \,\}|.$
The weight of a codeword $x$ is the number of its nonzero components, that is,
$w(x) = d(x,0) = |\{\, i : x_i \neq 0 \,\}|.$
The minimum distance of the code $C$ is given by
$d = \min\{\, d(x,y) : x, y \in C,\, x \neq y \,\},$
and for a linear code this is equivalent to
$d = \min\{\, w(x) : x \in C,\, x \neq 0 \,\}.$
A code with length $n$, dimension $k$, and minimum distance $d$ is denoted by $[n, k, d]_q$.
The dual code of $C$, denoted $C^\perp$, is defined as
\[
C^\perp = \{\, v \in \mathbb{F}_{q}^{n} ~|~ \langle v, c \rangle = 0 \text{ for all } c \in C \,\},
\]
where $\langle \cdot, \cdot \rangle$ is the Euclidean inner product. A linear code $C$ is called an LCD (Linear Complementary Dual) code if
$C \cap C^\perp = \{0\}.$ More generally, for any two linear codes $C$ and $D$ of the same length $n$, the pair $\{C, D\}$ is called an LCP (linear complementary pair) of codes if $C\oplus D=\mathbb{F}_q^n$. First, we recall a useful proposition.
\begin{prop}\cite[Theorem~2.1]{Guenda2019}\label{p-3}
For $i=1,2$, let $C_i$ be a linear code with a generator matrix $G_i$ and a parity check matrix $H_i$. If $C_1\cap C_2=\{ 0\}$, then $G_1H_2^\top$ and $G_2H_1^\top$ both are right-invertible.  
\end{prop}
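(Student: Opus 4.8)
The plan is to work directly with the matrix equation $G_1 H_2^\top$ over $\mathbb{F}_q$ and show it has a right inverse; the argument for $G_2 H_1^\top$ is symmetric. Write $k_i = \dim C_i$, so $G_i$ is a $k_i \times n$ matrix of full row rank and $H_i$ is an $(n-k_i)\times n$ matrix of full row rank. The product $G_1 H_2^\top$ is a $k_1 \times (n-k_2)$ matrix. A matrix over a field is right-invertible precisely when it has full row rank, so the goal reduces to showing $\operatorname{rank}(G_1 H_2^\top) = k_1$, i.e.\ that the rows of $G_1 H_2^\top$ are linearly independent.

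The key step is to interpret the kernel of the linear map $x \mapsto x\, (G_1 H_2^\top)$ for a row vector $x \in \mathbb{F}_q^{k_1}$. We have $x\,(G_1 H_2^\top) = (x G_1) H_2^\top$, and since $x G_1$ ranges over all of $C_1$ as $x$ ranges over $\mathbb{F}_q^{k_1}$ (because $G_1$ has full row rank, so $x \mapsto xG_1$ is injective), the vanishing $x\,(G_1 H_2^\top) = 0$ is equivalent to $c\, H_2^\top = 0$ for the codeword $c = x G_1 \in C_1$. But $c\, H_2^\top = 0$ says exactly that $c$ is orthogonal to every row of $H_2$, i.e.\ $c \in (C_2^\perp)^\perp = C_2$. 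Hence $x$ lies in the left kernel if and only if $x G_1 \in C_1 \cap C_2 = \{0\}$, and by injectivity of $x \mapsto x G_1$ this forces $x = 0$. Therefore the left kernel of $G_1 H_2^\top$ is trivial, the rows are independent, $\operatorname{rank}(G_1 H_2^\top) = k_1$, and the matrix is right-invertible. Swapping the roles of the indices $1$ and $2$ gives the same conclusion for $G_2 H_1^\top$.

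The only point requiring any care — and the place where the hypothesis $C_1 \cap C_2 = \{0\}$ is essential — is the translation between the algebraic condition $c H_2^\top = 0$ and membership $c \in C_2$; this uses the standard fact that $H_2$ is a parity-check matrix of $C_2$, so its row space is $C_2^\perp$ and its right kernel is $(C_2^\perp)^\perp = C_2$. Everything else is linear algebra over a field: full row rank is equivalent to the existence of a right inverse, and a full-row-rank matrix composed (on the right) with another matrix drops rank only in the presence of a nontrivial intersection of the relevant subspaces. I do not anticipate a genuine obstacle here; the statement is essentially a repackaging of the rank–nullity bookkeeping for the composition of the encoding map $G_1$ with the syndrome map $H_2^\top$.
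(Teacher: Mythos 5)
Your proof is correct. The paper itself gives no proof of this proposition --- it is imported verbatim from Guenda et al.\ \cite[Theorem~2.1]{Guenda2019} --- but your argument (trivial left kernel of $G_1H_2^\top$ via $xG_1H_2^\top=0 \Leftrightarrow xG_1\in C_1\cap C_2=\{0\}$, hence full row rank, hence right-invertibility) is the standard one and is exactly the reasoning the paper reuses in the necessity direction of its Theorem~\ref{th-3.3}.
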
 The family of all linear subspaces of $ \mathbb{F}_q^n $ is called the projective space of order $ n $ and is denoted by $ \mathcal{P}_q(n) $.

For any $ U, V \in \mathcal{P}_q(n) $, we define their sum as
$$
U + V = \{ u + v \mid u \in U, v \in V \}.
$$
The sum is the smallest subspace of $\mathbb{F}_q^n$ that contains both $U$ and $V$. It also satisfies the well-known dimension formula
$$
 \dim(U + V) = \dim(U) + \dim(V) - \dim(U \cap V).   
$$
A subspace code of length $ n $ is any non-trivial collection $\mathcal{C} \subseteq \mathcal{P}_q(n)$ containing at least two distinct subspaces. For any two subspaces $U, V\in\mathcal{P}_q(n)$, the subspace distance is defined as \begin{equation}\label{eq-1.1}
 d_s(U , V) = \dim(U+V) - \dim(U \cap V).
\end{equation} The minimum subspace distance of a subspace code $\mathcal{C} \subseteq \mathcal{P}_q(n)$ is then given by
$$
d_s(\mathcal{C})=\min\{d_s(U,V)~|~ U,V \in \mathcal{C}, U\neq V\}.
$$ 
If every subspace in $\mathcal{C}$ has dimension $ k $ for some $1\leq k \leq n$, then $ \mathcal{C} $ is referred to as a constant-dimension subspace code.

For any subspace code $\mathcal{C}$, we can associate another subspace code $\mathcal{C}^{\perp}$, called its complementary code, defined by
\[
\mathcal{C}^{\perp}=\{U^{\perp}\mid U\in \mathcal{C}\},
\]
where $U^{\perp}$ denotes the orthogonal complement of $U$.
Because the subspace distance satisfies
\[
d_s(U,V)=d_s(U^{\perp},V^{\perp}) \quad \text{for all } U,V\in \mathcal{P}_q(n),
\]
the minimum distance is preserved, that is, $d(C)=d(C^{\perp})$.

Furthermore, note that
\[
(U^{\perp})^{\perp}=U,\qquad (U+V)^{\perp}=U^{\perp}\cap V^{\perp},\qquad (U\cap V)^{\perp}=U^{\perp}+V^{\perp}.
\]
Next, we present a definition as given in \cite{BEV13, MB25}. 
\begin{defn}\cite[Definition~1]{BEV13}\label{de-2.2}
 Suppose $\mathcal{C} \subseteq \mathcal{P}_q(n)$, and let $\mathcal{C}_k$ denote the subset of $\mathcal{C}$ consisting of all $k$-dimensional subspaces. A map
$f : \mathcal{C} \to \mathcal{C}$
is called a complement on $\mathcal{C}$ if it satisfies  
\begin{enumerate}
\item[a)] $X \cap f(X) = {0}$ and $X + f(X) = \mathbb{F}_q^{n}$, i.e.,  $X \oplus f(X)=\mathbb{F}_q^{n}$.
\item[b)] $f$ provides a bijection between $\mathcal{C}_k$ and $\mathcal{C}_{n-k}$.
\item[c)] $f$ is an involution, i.e., $f(f(X))=X$ for all $X\in \mathcal{C}$.
\item[d)] $f$ preserves the subspace distance, $d_s(f(X), f(Y)) = d_s(X,Y)$ for all $X,Y\in \mathcal{C}$.
\end{enumerate}
\end{defn}

\section{LCPs of subspace codes}\label{lcp-4}
In this section, we explore LCPs arising from two subspace codes and provide a generalization of the main theorem in \cite{CS25} to this broader setting. This extension offers new insights into the structure and properties of LCPs within the framework of subspace codes.
We now define LCPs for subspace codes. Let $\mathcal{C}=\{C_i\}_{i\in I}$ be a subspace code indexed by a set $I$. A complementary family is a family $\mathcal{D}=\{D_i\}_{i\in I}$ such that $\dim(D_i)=\dim(C_i^\perp)$ for all $i$. (Note that $\mathcal{D}$ is not uniquely determined; a canonical choice is $\mathcal{D}=\mathcal{C}^\perp$.)

\begin{defn}\label{d-3.1}
Let $\mathcal{C}=\{C_i\}_{i\in I}$ and $\mathcal{D}=\{D_i\}_{i\in I}$ be families of subsets of $\mathcal{P}_q(n)$ with $\dim(D_i)=\dim(C_i^\perp)$. If
\[
C_i\cap D_j=\{0\}\quad\text{for all }i,j\in I,
\]
then $\{\mathcal{C},\mathcal{D}\}$ is called a linear complementary pair (LCP) of subspace codes.
\end{defn}
Clearly, $\mathcal{D}$ is a subset of the projective space $\mathcal{P}_q(n)$.
If, in particular, we take $\mathcal{D} = \mathcal{C}^\perp$, then the pair $\{\mathcal{C}, \mathcal{C}^\perp\}$ forms a linear complementary dual (LCD) subspace code.
\begin{prop}\label{p-3.1}
 Let $\mathcal{C},~\mathcal{D}\subseteq \mathcal{P}_q(n)$ be two subspace codes. Then, the pair $\{\mathcal{C}, \mathcal{D}\}$ is an LCP of subspace codes if and only if $d_s(C_i, D_j)=\dim(C_i)+\dim(D_j),$ for each $C_i\in\CC$, $D_j\in\mathcal{D}$.   
\end{prop}
\begin{proof}
 By Definition~\ref{d-3.1}, we know that $\{\mathcal{C},\mathcal{D}\}$ is an LCP of subspace codes if and only if $C_i\cap D_j=\{0\}$, for all $C_i\in \mathcal{C}$, $D_j\in \mathcal{D}$. Next, by \eqref{eq-1.1}, we have $
d_s(C_i , D_j) = \dim(C_i+D_j) - \dim(C_i \cap D_j)=\dim(C_i)+\dim(D_j)-2\dim(C_i\cap D_j). 
$ Hence, $\{\mathcal{C}, \mathcal{D}\}$ is LCP if and only if $d_s(C_i, D_j)=\dim(C_i)+\dim(D_j),$ for each $C_i\in\CC$, $D_j\in\mathcal{D}$.
\end{proof}
\begin{thm}\label{th-3.2}
  Let $\mathcal{C},~\mathcal{D}\subseteq \mathcal{P}_q(n)$ be two subspace codes. Then, the following statements are equivalent.
  \begin{enumerate}
 \item[1)] $\{\mathcal{C}, \mathcal{D}\}$ is an LCP of subspace codes if and only if  $\{\mathcal{C}^\perp, \mathcal{D}^\perp\}$ is LCP.
 \item[2)] For all $C_i\in \mathcal{C}$ and $D_j\in \mathcal{D}$, $\dim(C_i)+\dim(D_j)=n$.
  \end{enumerate} 
\end{thm}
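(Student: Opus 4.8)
The plan is to prove the two implications $(2)\Rightarrow(1)$ and $(1)\Rightarrow(2)$ separately. The ingredients are the duality identities recalled in Section~\ref{pre-3} — namely $(U+V)^{\perp}=U^{\perp}\cap V^{\perp}$, $(U^{\perp})^{\perp}=U$, and $(\mathbb{F}_q^{n})^{\perp}=\{0\}$ — together with the dimension formula $\dim(U+V)=\dim(U)+\dim(V)-\dim(U\cap V)$ and Definition~\ref{d-3.1} (equivalently Proposition~\ref{p-3.1}), which says that $\{\mathcal{C},\mathcal{D}\}$ is an LCP precisely when $C_i\cap D_j=\{0\}$ for all $C_i\in\mathcal{C}$, $D_j\in\mathcal{D}$, and likewise for $\{\mathcal{C}^{\perp},\mathcal{D}^{\perp}\}$ (recall every member of $\mathcal{C}^{\perp}$, resp. $\mathcal{D}^{\perp}$, has the form $C_i^{\perp}$, resp. $D_j^{\perp}$).

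For $(2)\Rightarrow(1)$: assume $\dim(C_i)+\dim(D_j)=n$ for all $i,j$. First observe that this hypothesis is self-dual, since $\dim(C_i^{\perp})+\dim(D_j^{\perp})=\bigl(n-\dim(C_i)\bigr)+\bigl(n-\dim(D_j)\bigr)=n$. The core step is the pointwise equivalence: under this hypothesis, $C_i\cap D_j=\{0\}$ if and only if $C_i^{\perp}\cap D_j^{\perp}=\{0\}$. Indeed, if $C_i\cap D_j=\{0\}$ then the dimension formula gives $\dim(C_i+D_j)=\dim(C_i)+\dim(D_j)=n$, so $C_i\oplus D_j=\mathbb{F}_q^{n}$, and dualizing yields $C_i^{\perp}\cap D_j^{\perp}=(C_i+D_j)^{\perp}=(\mathbb{F}_q^{n})^{\perp}=\{0\}$; the converse is the same argument applied to the pair $(C_i^{\perp},D_j^{\perp})$ (which satisfies the same dimension identity), using $(C_i^{\perp})^{\perp}=C_i$. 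Quantifying this equivalence over all $i,j$ and invoking Definition~\ref{d-3.1} gives statement (1).

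For $(1)\Rightarrow(2)$: I would argue by contraposition, assuming $\dim(C_i)+\dim(D_j)\neq n$ for some pair. If $\dim(C_i)+\dim(D_j)>n$, then $\dim(C_i\cap D_j)\geq\dim(C_i)+\dim(D_j)-n>0$, so $\{\mathcal{C},\mathcal{D}\}$ is not an LCP; if $\dim(C_i)+\dim(D_j)<n$, then $\dim(C_i^{\perp})+\dim(D_j^{\perp})=2n-\dim(C_i)-\dim(D_j)>n$, so the same count gives $C_i^{\perp}\cap D_j^{\perp}\neq\{0\}$ and $\{\mathcal{C}^{\perp},\mathcal{D}^{\perp}\}$ is not an LCP. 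Conversely, whenever $\{\mathcal{C},\mathcal{D}\}$ is an LCP one has $\dim(C_i)+\dim(D_j)=\dim(C_i+D_j)\leq n$ for all $i,j$, and whenever $\{\mathcal{C}^{\perp},\mathcal{D}^{\perp}\}$ is an LCP one gets symmetrically $\dim(C_i)+\dim(D_j)\geq n$; combining these one-sided bounds — which, via statement (1), become available as soon as either pair is an LCP — forces the equality in (2).

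The step I expect to be delicate is exactly $(1)\Rightarrow(2)$. As literally phrased, (1) is a biconditional, hence it holds vacuously whenever neither $\{\mathcal{C},\mathcal{D}\}$ nor $\{\mathcal{C}^{\perp},\mathcal{D}^{\perp}\}$ is an LCP, and in that regime the dimension equality can genuinely fail (for instance if $\mathcal{C}$ contains $\mathbb{F}_q^{n}$ together with a subspace $U$ with $0<\dim(U)<n$, then the dimension-matched $\mathcal{D}$ of \eqref{eq-1} forces both pairs to violate the LCP condition, while $\dim(\mathbb{F}_q^{n})+\dim(D)>n$ for the member $D\in\mathcal{D}$ matched to $U$). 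To obtain a clean proof I would therefore either add the standing hypothesis that at least one of the two pairs is an LCP (so that the biconditional in (1) is asserted non-vacuously), or restate (1) as the two one-directional implications "$\{\mathcal{C},\mathcal{D}\}$ LCP $\Rightarrow$ $\{\mathcal{C}^{\perp},\mathcal{D}^{\perp}\}$ LCP" and its converse. With that caveat in place the dimension count above closes the argument; the $(2)\Rightarrow(1)$ half is entirely routine once the self-duality of the dimension condition is noted.
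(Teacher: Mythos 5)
Your proof of $(2)\Rightarrow(1)$ is correct and is essentially the paper's argument in a different dress: the paper runs the pointwise equivalence through Proposition~\ref{p-3.1} and the invariance $d_s(C_i,D_j)=d_s(C_i^{\perp},D_j^{\perp})$, while you run it through $C_i\oplus D_j=\mathbb{F}_q^{n}$ and $(C_i+D_j)^{\perp}=C_i^{\perp}\cap D_j^{\perp}$; these are interchangeable, and your observation that the dimension hypothesis is self-dual is exactly what makes the converse half of the pointwise equivalence free.

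The substantive point is $(1)\Rightarrow(2)$, and there you have identified a genuine defect in the statement rather than a gap in your own argument. The paper's proof of this direction starts from the biconditional $d_s(C_i,D_j)=\dim C_i+\dim D_j \Leftrightarrow d_s(C_i^{\perp},D_j^{\perp})=\dim C_i^{\perp}+\dim D_j^{\perp}$ and, using $d_s(C_i,D_j)=d_s(C_i^{\perp},D_j^{\perp})$, concludes $\dim C_i+\dim D_j=\dim C_i^{\perp}+\dim D_j^{\perp}$. That inference is valid only when both equations actually hold, i.e.\ when both pairs are LCPs; from ``$A$ iff $B$'' with $A$ and $B$ both false one cannot equate the two right-hand sides. (There is also a quantifier slip: statement~(1) is a biconditional between two globally quantified properties, which the paper silently converts into a pointwise biconditional for each fixed $(C_i,D_j)$.) Your example with $\mathbb{F}_q^{n}\in\mathcal{C}$ lands precisely in this vacuous regime, so $(1)\Rightarrow(2)$ is false as literally stated, and the theorem needs the extra hypothesis that at least one of the two pairs is an LCP; with it, your two one-sided dimension bounds close the argument cleanly. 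One caveat on your proposed repairs: rewriting (1) as the two separate implications does not help, since both are still vacuously true when neither pair is an LCP; only the non-vacuousness hypothesis (or recasting (1) as ``both pairs are LCPs'') actually rescues this direction.
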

\begin{proof}
$1) \Rightarrow 2)$: Assume $\{\mathcal{C}, \mathcal{D}\}$ is an LCP of subspace codes if and only if  $\{\mathcal{C}^\perp, \mathcal{D}^\perp\}$ is LCP. 

By Proposition \ref{p-3.1}, for every $C_i\in\mathcal C$ and $D_j\in\mathcal D$,
\[
d_s(C_i,D_j)=\dim C_i+\dim D_j
\quad\Longleftrightarrow\quad
d_s(C_i^\perp,D_j^\perp)=\dim C_i^\perp+\dim D_j^\perp.
\] 
But the subspace distance is invariant under orthogonal complement
\[
d_s(C_i,D_j)=d_s(C_i^\perp,D_j^\perp).
\]
Therefore,
\[
\dim C_i+\dim D_j=\dim C_i^\perp+\dim D_j^\perp.
\]
Using $\dim C^\perp=n-\dim C$, this becomes
\[
\dim C_i+\dim D_j
= (n-\dim C_i)+(n-\dim D_j)
\quad\Longrightarrow\quad
\dim C_i+\dim D_j = n .
\]
 $2)\Rightarrow 1)$: 
Assume that $\dim(C_i) + \dim(D_j) = n$ for all $C_i \in \mathcal{C} \text{ and } D_j \in \mathcal{D}$. Then, clearly, $\dim(C_i^\perp) + \dim(D_j^\perp) = n$. By Proposition~\ref{p-3.1}, $\{\mathcal{C}, \mathcal{D}\}$ forms LCP if and only if $d_s(C_i, D_j) = \dim(C_i) + \dim(D_j) $ for all $C_i \in \mathcal{C}$ and $D_j \in \mathcal{D}$.

Moreover, since
\[
d_s(C_i^\perp, D_j^\perp) = d_s(C_i, D_j) = n = \dim(C_i^\perp) + \dim(D_j^\perp),
\]
it follows that $\{\mathcal{C}^\perp, \mathcal{D}^\perp\}$ is also an LCP of subspace codes.

Conversely, if $\{\mathcal{C}^\perp, \mathcal{D}^\perp\}$ is a LCP, then $d_s(C_i^\perp, D_j^\perp) = \dim(C_i^\perp) + \dim(D_j^\perp)$, for all $C_i \in \mathcal{C} \text{ and } D_j \in \mathcal{D}$. Note that
\[ d_s(C_i, D_j)=
d_s(C_i^\perp, D_j^\perp) = \dim(C_i^\perp) + \dim(D_j^\perp) = n = \dim(C_i) + \dim(D_j),
\]
which immediately implies that $\{\mathcal{C}, \mathcal{D}\}$ is LCP as well.
\end{proof}
In the following theorem, we show a characterization of LCP of subspace codes that is analogous to the result of LCD subspace codes in \cite{CS25}.
\begin{thm}\label{th-3.3}
 Let $\mathcal{C},~\mathcal{D}\subseteq \mathcal{P}_q(n)$ be two subspace codes. For each $C_i\in\mathcal{C}$ and $D_j\in\mathcal{D}$, let $G_{C_i}$ and $G_{D_j}$ denote generator matrices, and let $H_{C_i}$ and $H_{D_j}$ denote parity check matrices, respectively. Then the pair $\{\mathcal{C}, \mathcal{D}\}$ is an LCP of subspace codes if and only if $G_{C_i}H_{D_j}^\top$ is right-invertible or $G_{D_j}H_{C_i}^\top$ is right-invertible, for all $C_i\in \mathcal{C}$, $D_j\in\mathcal{D}$.
\end{thm}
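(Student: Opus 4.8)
The plan is to reduce the statement to Proposition~\ref{p-3} by showing that, for subspace codes, the LCP condition from Definition~\ref{d-3.1} is exactly the hypothesis ``$C_1\cap C_2=\{0\}$'' applied to every pair $(C_i,D_j)$ with $C_i\in\mathcal{C}$ and $D_j\in\mathcal{D}$. Indeed, by Definition~\ref{d-3.1}, $\{\mathcal{C},\mathcal{D}\}$ is an LCP of subspace codes if and only if $C_i\cap D_j=\{0\}$ for all $C_i\in\mathcal{C}$ and $D_j\in\mathcal{D}$. So the first step is simply to fix an arbitrary such pair $(C_i,D_j)$ and treat $C_i$ and $D_j$ as linear codes of length $n$ with the prescribed generator and parity check matrices $G_{C_i},H_{C_i}$ and $G_{D_j},H_{D_j}$.

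For the forward direction, I would assume $\{\mathcal{C},\mathcal{D}\}$ is an LCP of subspace codes, so $C_i\cap D_j=\{0\}$. Applying Proposition~\ref{p-3} with $C_1=C_i$ and $C_2=D_j$ gives that \emph{both} $G_{C_i}H_{D_j}^\top$ and $G_{D_j}H_{C_i}^\top$ are right-invertible; in particular the disjunction ``$G_{C_i}H_{D_j}^\top$ is right-invertible or $G_{D_j}H_{C_i}^\top$ is right-invertible'' holds, and this holds for every admissible pair $(C_i,D_j)$, which is what we want.

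For the converse, I would argue the contrapositive at the level of a single pair. Suppose $\{\mathcal{C},\mathcal{D}\}$ is not an LCP, so there exist $C_i\in\mathcal{C}$ and $D_j\in\mathcal{D}$ with $C_i\cap D_j\neq\{0\}$; pick a nonzero $v\in C_i\cap D_j$. Write $v=xG_{C_i}$ for some nonzero $x$ (using that $G_{C_i}$ has full row rank) and, since $v\in D_j$, we have $vH_{D_j}^\top=0$, hence $x\bigl(G_{C_i}H_{D_j}^\top\bigr)=0$ with $x\neq0$; therefore $G_{C_i}H_{D_j}^\top$ has a nontrivial left kernel and is \emph{not} right-invertible. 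Symmetrically, writing $v=yG_{D_j}$ and using $vH_{C_i}^\top=0$ shows $G_{D_j}H_{C_i}^\top$ is not right-invertible either. Thus for this pair neither matrix is right-invertible, so the matrix condition in the statement fails, completing the contrapositive.

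The only genuine subtlety here is bookkeeping: the statement quantifies the disjunction over all pairs $(C_i,D_j)$, while Proposition~\ref{p-3} is a per-pair statement, so one must be careful that ``LCP'' means the intersection condition holds for \emph{every} pair, and correspondingly the matrix condition must hold for \emph{every} pair — the converse argument above indeed shows that a single bad pair breaks the condition for that very pair. A secondary point to state cleanly is that $G_{C_i}$ and $G_{D_j}$ are full-rank by definition of a generator matrix, which is what lets us write codewords as $xG$ and pass between the vector $v$ and the coefficient vector $x$; no further results are needed beyond Proposition~\ref{p-3} and elementary linear algebra.
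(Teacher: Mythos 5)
Your proof is correct and follows essentially the same route as the paper: the forward direction rests on the fact that a trivial intersection forces $G_{C_i}H_{D_j}^\top$ (and $G_{D_j}H_{C_i}^\top$) to be right-invertible, and the converse uses the same computation $x = \alpha G_{C_i}$, $xH_{D_j}^\top = 0 \Rightarrow \alpha\,G_{C_i}H_{D_j}^\top = 0$. The only cosmetic differences are that you invoke Proposition~\ref{p-3} where the paper re-derives that implication inline, and that you handle the disjunction ``$\ldots$ or $\ldots$'' more carefully (showing \emph{both} products are right-invertible in one direction and \emph{neither} is in the contrapositive), which is arguably a cleaner treatment of the quantifiers than the paper's.
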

\begin{proof}
To prove the sufficient condition, let 
 $x\in C_i\cap D_j$. Then there exist vectors 
 $\alpha\in \FF_q^{k_i}$ and $\beta\in \FF_q^{k_j}$ such that $$x=\alpha G_{C_i} \text{ and } x=\beta G_{D_j},$$ where $k_i$ and $k_j$ are the dimensions of $C_i$ and $D_j$, respectively. This gives $\alpha G_{C_i}H_{D_j}^\top=0$. Since $G_{C_i}H_{D_j}^\top$ is right-invertible, it follows that $\alpha=0$, and hence $x=0$. Therefore, $C_i\cap D_j=\{0\}$ for all $C_i\in \mathcal{C}$, $D_j\in\mathcal{D}$. Thus, by Definition~\ref{d-3.1},  the pair $\{\mathcal{C}, \mathcal{D}\}$ forms an LCP of subspace codes. 

To prove the necessary condition, suppose that $G_{C_i} H_{D_j}^\top$ is not right-invertible. Then there exists a nonzero vector $\delta \in \mathbb{F}_q^{k_i}$ such that
\[
\delta G_{C_i} H_{D_j}^\top = 0,
\]
where $k_i = \dim(C_i)$.
Since $\delta G_{C_i} \in C_i$ and $\delta G_{C_i} H_{D_j}^\top = 0$, it follows that $\delta G_{C_i} \in D_j$. Therefore,
\[
0 \neq \delta G_{C_i} \in C_i \cap D_j,
\]
contradicting the assumption that $C_i \cap D_j = \{0\}$.
This concludes the proof.
\end{proof}
Now, we give a second necessary and sufficient condition for LCP of subspace codes. 
\begin{thm}
  Let $\mathcal{C},~\mathcal{D}\subseteq \mathcal{P}_q(n)$ be two subspace codes. For each $C_i\in\mathcal{C}$ and $D_j\in\mathcal{D}$, let $G_{C_i}$ and $G_{D_j}$ denote generator matrices, respectively. Assume that $\mathcal{G}_{\left(C_i,D_j\right)}=\left( \begin{array}{cc}
       G_{C_i}  \\
       G_{D_j}
  \end{array}\right)$ is an $n\times n$ matrix. Then $\{\mathcal{C},\mathcal{D}\}$ is an LCP of subspace codes if and only if $\mathcal{G}_{\left(C_i,D_j\right)}$ is invertible, for all $C_i\in\mathcal{C}$ and $D_j\in\mathcal{D}$.
\end{thm}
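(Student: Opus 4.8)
The plan is to reduce the statement to the dimension formula together with Definition~\ref{d-3.1}, via the elementary observation that the row space of the stacked matrix $\mathcal{G}_{(C_i,D_j)}$ is exactly $C_i+D_j$. I would fix an arbitrary pair $C_i\in\mathcal{C}$, $D_j\in\mathcal{D}$ and work with it throughout, since both the LCP property and the invertibility claim are stated pairwise and then quantified over all pairs.

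First I would unpack what the hypothesis that $\mathcal{G}_{(C_i,D_j)}$ is an $n\times n$ matrix actually encodes. A generator matrix $G_{C_i}$ has $\dim(C_i)$ rows and $G_{D_j}$ has $\dim(D_j)$ rows, and both have full row rank by definition of a generator matrix, so the stacked matrix has precisely $\dim(C_i)+\dim(D_j)$ rows and $n$ columns. Requiring it to be square therefore forces $\dim(C_i)+\dim(D_j)=n$; this numerical identity is exactly the ingredient needed later.

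Next, observe that the $\mathbb{F}_q$-row space of $\mathcal{G}_{(C_i,D_j)}$ is the span of the rows of $G_{C_i}$ together with the rows of $G_{D_j}$, which is precisely $C_i+D_j$. A square matrix over a field is invertible if and only if its rows span the whole space, so $\mathcal{G}_{(C_i,D_j)}$ is invertible if and only if $C_i+D_j=\mathbb{F}_q^n$, i.e. $\dim(C_i+D_j)=n$. Applying the dimension formula $\dim(C_i+D_j)=\dim(C_i)+\dim(D_j)-\dim(C_i\cap D_j)=n-\dim(C_i\cap D_j)$, where the last equality uses the identity from the previous step, we get that $\mathcal{G}_{(C_i,D_j)}$ is invertible if and only if $\dim(C_i\cap D_j)=0$, i.e. $C_i\cap D_j=\{0\}$.

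Finally I would assemble the equivalence: by Definition~\ref{d-3.1}, $\{\mathcal{C},\mathcal{D}\}$ is an LCP of subspace codes exactly when $C_i\cap D_j=\{0\}$ for every $C_i\in\mathcal{C}$ and $D_j\in\mathcal{D}$, and by the preceding paragraph this holds for a given pair precisely when $\mathcal{G}_{(C_i,D_j)}$ is invertible; quantifying over all pairs yields the theorem. There is no genuine obstacle here — the only point requiring a moment of care is the bookkeeping in the first step, namely noticing that the squareness hypothesis silently carries the condition $\dim(C_i)+\dim(D_j)=n$, which is exactly what makes ``$C_i+D_j=\mathbb{F}_q^n$'' and ``$C_i\cap D_j=\{0\}$'' equivalent.
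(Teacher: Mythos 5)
Your proof is correct, but it takes a genuinely different route from the paper's. You identify the row space of $\mathcal{G}_{(C_i,D_j)}$ with $C_i+D_j$ and argue that invertibility of a square matrix is equivalent to its rows spanning $\mathbb{F}_q^n$, then convert $C_i+D_j=\mathbb{F}_q^n$ into $C_i\cap D_j=\{0\}$ via the dimension formula and the identity $\dim(C_i)+\dim(D_j)=n$ that is forced by squareness. The paper instead works with the \emph{null space}: a vector $x$ with $\mathcal{G}_{(C_i,D_j)}x^\top=0$ lies in $C_i^\perp\cap D_j^\perp$, so invertibility is equivalent to $C_i^\perp\cap D_j^\perp=\{0\}$, i.e.\ to $\{\mathcal{C}^\perp,\mathcal{D}^\perp\}$ being an LCP; the paper then invokes its Theorem~\ref{th-3.2} (using the same dimension identity $\dim(C_i)+\dim(D_j)=n$) to transfer the LCP property back to $\{\mathcal{C},\mathcal{D}\}$. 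Your argument is more self-contained and elementary, needing only the dimension formula rather than the duality theorem; the paper's version buys a reuse of its earlier machinery and makes the role of the dual codes explicit. Both hinge on the same key bookkeeping point you flag, namely that the squareness hypothesis silently encodes $\dim(C_i)+\dim(D_j)=n$.
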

\begin{proof}
 Observe that
\[
\mathcal{G}_{\left(C_i,D_j\right)}=\begin{pmatrix} G_{C_i} \\ G_{D_j} \end{pmatrix}
\]
is an $n\times n$ matrix. Hence $\dim(C_i)+\dim(D_j)=n$ for every $C_i\in\mathcal{C}$ and $D_j\in\mathcal{D}$.

To prove the sufficient condition, take any $x\in C_i^\perp\cap D_j^\perp$.
Then
\[
G_{C_i}x^\top=0 \quad\text{and}\quad G_{D_j}x^\top=0,
\]
which together imply
\[
\mathcal{G}_{\left(C_i,D_j\right)}x^\top=0.
\]
Since $\mathcal{G}_{\left(C_i,D_j\right)}$ is invertible, the only solution is $x=0$. Thus
$C_i\cap D_j={0}$.
Consequently, by applying Theorem~\ref{th-3.2}, we conclude that $\{\mathcal{C}^\perp,\mathcal{D}^\perp\}$ is LCP as well.

To prove the necessary condition, we show that the matrix $\mathcal{G}_{\left(C_i,D_j\right)}$ must be nonsingular. Assume, to the contrary, that $\mathcal{G}_{\left(C_i,D_j\right)}$ is singular. Then there exists a nonzero vector $y \in \mathbb{F}_q^n$ such that
\[
\mathcal{G}_{\left(C_i,D_j\right)}y^\top = 0,
\qquad\text{i.e.,}\qquad
\begin{pmatrix}
G_{C_i} \\
G_{D_j}
\end{pmatrix}
y^\top = 0.
\]
Consequently,
\[
G_{C_i}y^\top = 0
\quad\text{and}\quad
G_{D_j}y^\top = 0,
\]
which implies that
\[
0 \neq y \in C_i^\perp \cap D_j^\perp.
\]
Thus $\{ \mathcal{C}^\perp, \mathcal{D}^\perp \}$ fails to form an LCP of subspace codes. By Theorem~\ref{th-3.2}, this means that $\{ \mathcal{C}, \mathcal{D} \}$ cannot be an LCP of subspace codes, contradicting our assumption. This completes the proof.
\end{proof}

For a subspace code $\mathcal{C} \subseteq \mathcal{P}_q(n)$ and an integer $k$ with $1 \le k \le n-1$, define
\[
\mathcal{C}_k \coloneqq \{ X \in \mathcal{C} \mid \dim(X) = k \},
\]
that is, $\mathcal{C}_k$ denotes the collection of all $k$-dimensional subspaces in $\mathcal{C}$.

\begin{thm}\label{th-3.5}
Let $\mathcal{C}\subseteq\mathcal{P}_q(n)$ be a subspace code and let $f$ be a complement function on $\mathcal{C}$ (see Definition~\ref{de-2.2}). For a fixed integer $k$ with $1\le k\le n-1$, denote $\mathcal{C}_k=\{X\in\mathcal{C}\mid \dim X=k\}$. Suppose that
\[
X+f(Y)=\mathbb{F}_q^n \qquad\text{for all } X,Y\in\mathcal{C}_k.
\]
Then the pair $\{\mathcal{C}_k,\; f(\mathcal{C}_k)\}$ is an LCP of subspace codes.
\end{thm}

\begin{proof}
Since $f$ is a complement function, it maps $\mathcal{C}_k$ bijectively onto $\mathcal{C}_{n-k}$ and satisfies $Y\oplus f(Y)=\mathbb{F}_q^n$ for every $Y\in\mathcal{C}$. In particular, $\dim f(Y)=n-k$.

Take arbitrary $X,Y\in\mathcal{C}_k$. By hypothesis, $X+f(Y)=\mathbb{F}_q^n$. The dimension formula gives
\[
\dim\bigl(X+f(Y)\bigr)=\dim X+\dim f(Y)-\dim\bigl(X\cap f(Y)\bigr).
\]
Substituting $\dim X=k$, $\dim f(Y)=n-k$, and $\dim(X+f(Y))=n$, we obtain
\[
n = k + (n-k) - \dim\bigl(X\cap f(Y)\bigr),
\]
hence $\dim\bigl(X\cap f(Y)\bigr)=0$, i.e., $X\cap f(Y)=\{0\}$. This holds for every $X,Y\in\mathcal{C}_k$.

Thus $\{\mathcal{C}_k,\; f(\mathcal{C}_k)\}$ satisfies the condition of Definition~\ref{d-3.1} (with the obvious indexing $C_i\leftrightarrow X$, $D_i\leftrightarrow f(X)$). Therefore it is an LCP of subspace codes. 
\end{proof}
\begin{remark}
Let $\mathcal{C}=\{C_1,\ldots,C_s\}$ be a collection of $k$-dimensional subspaces of $\mathbb{F}_q^n$, and let $\mathcal{D}=\{D_1,\ldots,D_s\}$ be a collection of $(n-k)$-dimensional subspaces of $\mathbb{F}_q^n$. Suppose that $\{\mathcal{C},\mathcal{D}\}$ forms an LCP of subspace codes. Define a map $f : \mathcal{C}\cup\mathcal{D} \longrightarrow \mathcal{C}\cup\mathcal{D}$ by
\[
f(x) = \left\{\begin{array}{ll}
D_i & \textrm{if }x=C_i\in\mathcal{C};\\
C_i & \textrm{if }x=D_i\in\mathcal{D}; \quad (1\le i\le s)\\
0 & \textrm{otherwise.}
\end{array}\right.
\]
The map $f$ satisfies the following properties:
\begin{enumerate}
    \item[a)]  $X \cap f(X)=\{0\}, \qquad X+f(X)=\mathbb{F}_q^n,$
    i.e., $X\oplus f(X)=\mathbb{F}_q^n$.
    
    \item[b)]  The map $f$ gives a bijection between $\mathcal{C}$ and $\mathcal{D}$.
    
    \item[c)] $f(f(X))=X$ for all $X\in\mathcal{C}\cup\mathcal{D}$.
    
    \item[d)]
    $d_s(f(X),f(Y)) = d_s(X,Y) \qquad \text{for all } X,Y\in\mathcal{C}\cup\mathcal{D}.$
\end{enumerate}
Thus, $f$ forms a complement function on $\mathcal{C}\cup\mathcal{D}$.
\end{remark}
\section{LCP subspace codes derived from matrix codes over $\mathbb{F}_q$}\label{m-5}
For any matrix $A \in \mathbb{F}_q^{n \times n}$, denote its $i $-th column by
$A_i = (a_{1i}, a_{2i}, \ldots, a_{ni})^T .$
The vector space spanned by the transpose of the columns of a matrix $A \in \mathbb{F}_q^{n \times m} $ can be written as
$\operatorname{colsp}(A) \subseteq \mathbb{F}_q^{n}.$
Furthermore, we use the standard $\mathbb{F}_q$-vector space isomorphism
$\mathbb{F}_q^{n \times m} \cong \mathbb{F}_{q^m}^{n}$.

\begin{defn}
Let $ U \subseteq \mathbb{F}_q^{n} $ be a linear code. Consider the set of all matrices
\[
\mathbb{F}_q^{n \times m}(U):=\{A \in \mathbb{F}_q^{n \times m} ~ \mid ~ \operatorname{colsp}(A) \subseteq U \}.
\]
This forms a linear code in $ \mathbb{F}_q^{n \times m} $, called the matrix code induced by $U $.  
\end{defn}
\begin{thm}
   Let $\mathcal{C}~\mathcal{D}\subseteq \mathcal{P}_q(n)$ be two subspace codes and define
\[
\mathbb{F}_q^{n \times m}(\mathcal{C}) = \{ \mathbb{F}_q^{n \times m}(U) \mid U \in \mathcal{C} \} \text{ and  } \mathbb{F}_q^{n \times m}(\mathcal{D}) = \{ \mathbb{F}_q^{n \times m}(V) \mid V \in \mathcal{D} \}.
\]
Then $\{\mathbb{F}_q^{n \times m}(\mathcal{C}), \mathbb{F}_q^{n \times m}(\mathcal{D}) \}$  is an  LCP of subspace codes if and only if  $\{\mathcal{C},\mathcal{D}\}$ is an  LCP of subspace codes.
 \end{thm}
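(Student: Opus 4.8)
The plan is to reduce the whole statement to one structural identity for the operator $\mathbb{F}_q^{n\times m}(\cdot)$ and then quote Definition~\ref{d-3.1} verbatim. The first step is to establish the key lemma
\[
\mathbb{F}_q^{n\times m}(U)\cap\mathbb{F}_q^{n\times m}(V)=\mathbb{F}_q^{n\times m}(U\cap V)
\qquad\text{for all } U,V\in\mathcal{P}_q(n).
\]
This is immediate from the definition of the induced matrix code: a matrix $A\in\mathbb{F}_q^{n\times m}$ lies in the left-hand side exactly when $\operatorname{colsp}(A)\subseteq U$ and $\operatorname{colsp}(A)\subseteq V$, which holds precisely when $\operatorname{colsp}(A)\subseteq U\cap V$, i.e. when $A\in\mathbb{F}_q^{n\times m}(U\cap V)$.

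The second step is the elementary observation that $\mathbb{F}_q^{n\times m}(W)=\{0\}$ if and only if $W=\{0\}$: if $W=\{0\}$ then the zero matrix is the only matrix with column space contained in $W$; conversely, if $0\neq w\in W$, then the matrix whose first column equals $w$ and whose remaining $m-1$ columns vanish is a nonzero element of $\mathbb{F}_q^{n\times m}(W)$. Combining the two steps yields the equivalence
\[
\mathbb{F}_q^{n\times m}(U)\cap\mathbb{F}_q^{n\times m}(V)=\{0\}
\quad\Longleftrightarrow\quad
U\cap V=\{0\}.
\]

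The third step is to assemble these pairwise equivalences. By Definition~\ref{d-3.1}, the pair $\{\mathbb{F}_q^{n\times m}(\mathcal{C}),\mathbb{F}_q^{n\times m}(\mathcal{D})\}$ is an LCP of subspace codes exactly when $\mathbb{F}_q^{n\times m}(U)\cap\mathbb{F}_q^{n\times m}(V)=\{0\}$ for every $U\in\mathcal{C}$ and $V\in\mathcal{D}$; by the displayed equivalence this holds exactly when $U\cap V=\{0\}$ for all such $U,V$, that is, exactly when $\{\mathcal{C},\mathcal{D}\}$ is an LCP of subspace codes. Along the way I would also record the dimension bookkeeping $\dim\mathbb{F}_q^{n\times m}(U)=m\dim(U)$ (each column ranges independently over $U$), so that in the ambient space $\mathbb{F}_q^{n\times m}$ of dimension $nm$ one gets $\dim\big(\mathbb{F}_q^{n\times m}(U)\big)^{\perp}=nm-m\dim(U)=m\dim(U^{\perp})=\dim\mathbb{F}_q^{n\times m}(V)$ whenever $\dim(V)=\dim(U^{\perp})$. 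This confirms that $\mathbb{F}_q^{n\times m}(\mathcal{D})$ really is the companion family attached to $\mathbb{F}_q^{n\times m}(\mathcal{C})$ in the sense of \eqref{eq-1}, so that Definition~\ref{d-3.1} is applicable on the matrix side.

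I do not expect a genuine obstacle: the two lemmas in the first two steps are one-line unwindings of $\operatorname{colsp}$, and the final assembly is purely formal. The only point deserving a moment of care is the dimension bookkeeping in the last paragraph, namely checking that $\mathbb{F}_q^{n\times m}(\cdot)$ transports the defining dimension constraint of \eqref{eq-1} to the corresponding constraint for the matrix codes (with respect to the nondegenerate inner product on $\mathbb{F}_q^{n\times m}$ under the identification $\mathbb{F}_q^{n\times m}\cong\mathbb{F}_{q^m}^{n}$); once that is in place, the equivalence follows directly.
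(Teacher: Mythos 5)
Your proposal is correct and follows essentially the same route as the paper: both hinge on the identity $\mathbb{F}_q^{n\times m}(U)\cap\mathbb{F}_q^{n\times m}(V)=\mathbb{F}_q^{n\times m}(U\cap V)$ and then invoke Definition~\ref{d-3.1}. You additionally spell out the step that $\mathbb{F}_q^{n\times m}(W)=\{0\}$ iff $W=\{0\}$ and the dimension bookkeeping for \eqref{eq-1}, which the paper leaves implicit under ``the rest follows immediately.''
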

\begin{proof}
We first establish the identity
\[
\mathbb{F}_{q}^{n\times m}(U)\cap \mathbb{F}_{q}^{n\times m}(V)
=\mathbb{F}_{q}^{n\times m}(U\cap V).
\]
Take any matrix
\[
A\in \mathbb{F}_{q}^{n\times m}(U)\cap \mathbb{F}_{q}^{n\times m}(V).
\]
Thus, every column of $A$ lies both in $U$ and in $V$, which implies
\[
\operatorname{colsp}(A)\subseteq U\cap V.
\]
Therefore,
\begin{equation}\label{e-1q}
   \mathbb{F}_{q}^{n\times m}(U)\cap \mathbb{F}_{q}^{n\times m}(V)
\subseteq
\mathbb{F}_{q}^{n\times m}(U\cap V) 
\end{equation}
In fact, on the other hand, let
\[
B\in \mathbb{F}_{q}^{n\times m}(U\cap V).
\]
Then
$\operatorname{colsp}(B)\subseteq U\cap V$,
so in particular
\[
\operatorname{colsp}(B)\subseteq U
\quad\text{and}\quad
\operatorname{colsp}(B)\subseteq V.
\]
Thus,
\begin{equation}\label{e-2q}
  \mathbb{F}_{q}^{n\times m}(U\cap V)
\subseteq
\mathbb{F}_{q}^{n\times m}(U)\cap \mathbb{F}_{q}^{n\times m}(V).  
\end{equation}
From \eqref{e-1q} and \eqref{e-2q}, we conclude
\[
\mathbb{F}_{q}^{n\times m}(U)\cap \mathbb{F}_{q}^{n\times m}(V)
=\mathbb{F}_{q}^{n\times m}(U\cap V).
\]
By Definition~\ref{d-3.1}, $\{\mathcal{C}, \mathcal{D}\}$ is an LCP of subspace codes exactly when
\[
U\cap V=\{0\} \qquad \text{for all } U\in \mathcal{C} \text{ and } V\in\mathcal{D}.
\]
The rest of the theorem follows immediately.
\end{proof}
\section{Some constructions of LCPs  of subspace codes}\label{con-6}
\subsection{LCP of subspace codes derived from $[u|u+v]$-construction}
Let $C_1$ and $C_2$ be linear codes over $\FF_q$ with parameters $[n, k_1]_q$ and $[n, k_2]_q$, respectively.
The $[u, u+v]$-construction (also known as the Plotkin sum) formed from $C_1$ and $C_2$, denoted by $\mathcal{P}(C_1, C_2)$, is defined as
\[
\mathcal{P}(C_1, C_2) : = \{ (u, u+v) \mid u \in C_1, v \in C_2 \}.
\]
Let $G_i$ and $H_i$ denote a generator matrix and a parity check matrix of $C_i$, respectively, for $i=1,2$. Then generator and parity check matrices for $\mathcal{P}(C_1, C_2)$ are given by
\[
\mathcal{G}=\begin{pmatrix} 
G_{1} & G_1 \\ 
0 & G_{2} 
\end{pmatrix} \text{ and } \mathcal{H}=\begin{pmatrix} 
H_{1} & 0 \\ 
-H_{2} & H_{2} 
\end{pmatrix}.
\]
\begin{thm}
 Let $\mathcal{C}_i~\mathcal{D}_i\subseteq \mathcal{P}_q(n)$ be a subspace code, for $i=1,2$. Suppose the pairs $\{\mathcal{C}_1, \mathcal{D}_1\}$ and  $\{\mathcal{C}_2, \mathcal{D}_2\}$  are LCPs of subspace codes. Then the pair $\{\{\mathcal{P}(C_1, C_2)~|~C_1\in\mathcal{C}_1, C_2\in\mathcal{C}_2\}, \{\mathcal{P}(D_1, D_2)~|~D_1\in\mathcal{D}_1, D_2\in\mathcal{D}_2\}\}$ forms an LCP of subspace codes.
\end{thm}
\begin{proof}
 To prove this, let $\mathcal{G}_1$ and $\mathcal{H}_1$ be a generator matrix and a parity check matrix for $\mathcal{P}(C_1, C_2)$  and $\mathcal{G}_2$ and $\mathcal{H}_2$ be the corresponding matrices for $\mathcal{P}(D_1, D_2)$, respectively. It is enough to show that $\mathcal{G}_1\mathcal{H}_2^\top$ is invertible. 
 Further, let $G_i$ be a generator matrix of $C_i$ and $H_i$ be a parity check matrix of $D_i$, for $i=1,2$. Therefore, we may write
  \[
\mathcal{G}_1=\begin{pmatrix} 
G_{1} & G_1 \\ 
0 & G_{2} 
\end{pmatrix} \text{ and } \mathcal{H}_2=\begin{pmatrix} 
H_{1} & 0 \\ 
-H_{2} & H_{2} 
\end{pmatrix}.
\]
Now, 
 \[
\mathcal{G}_1 \mathcal{H}_2^\top=\begin{pmatrix} 
G_{1} & G_1 \\ 
0 & G_{2} 
\end{pmatrix}  
\begin{pmatrix} 
H_{1}^\top & -H_{2}^\top \\ 
0 & H_{2}^\top 
\end{pmatrix}=\begin{pmatrix} 
G_1H_{1}^\top & 0 \\ 
0 & G_2H_{2}^\top 
\end{pmatrix}.
\]
Since, $C_1\cap D_1=\{0\}$ and $C_2\cap D_2=\{0\}$, then by applying Proposition~\ref{p-3}, both $G_1H_1^\top$ and $G_2H_2^\top$ are right-invertible.  Then by Theorem~\ref{th-3.3}, the result follows immediately. 
\end{proof}
The $[u+v,,v]$-construction formed from $C_1$ and $C_2$, denoted by $\Tilde{\mathcal{P}}(C_1, C_2)$, defined as
\[
\Tilde{\mathcal{P}}(C_1, C_2) : = \{ (u+v, v) \mid u \in C_1, v \in C_2 \}.
\] 
In fact, $\Tilde{\mathcal{P}}(C_1, C_2)= \mathcal{P}(C_1+C_2, -C_1)$.
Let $G_i$ and $H_i$ denote a generator matrix and a parity check matrix of $C_i$, respectively, for $i=1,2$. Then generator and parity check matrices for $\Tilde{\mathcal{P}}(C_1, C_2)$ are given by
\[
\Tilde{\mathcal{G}}=\begin{pmatrix} 
G_{1} & 0 \\ 
G_2 & G_{2} 
\end{pmatrix} \text{ and } \Tilde{\mathcal{H}}=\begin{pmatrix} 
H_{1} & -H_1 \\ 
0 & H_{2} 
\end{pmatrix}.
\]
\begin{thm}
Let $\mathcal{C}~\mathcal{D}\subseteq \mathcal{P}_q(n)$ be two subspace codes. Suppose the pair $\{\mathcal{C}, \mathcal{D}\}$ forms an LCP of subspace code. Then the pair $\left\{\{\mathcal{P}(C_1, C_2)~|~C_1\in\mathcal{C}, C_2\in\mathcal{D}\}, \{\Tilde{\mathcal{P}}(C_1, C_2)~|~C_1\in\mathcal{C}, C_2\in\mathcal{D}\}\right\}$ is also an LCP of subspace codes.
\end{thm}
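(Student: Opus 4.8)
The plan is to mimic the proof of the preceding theorem and reduce the claim to Theorem~\ref{th-3.3}. By that theorem it suffices to show that for \emph{every} choice of $C_1,C_1'\in\mathcal{C}$ and $C_2,C_2'\in\mathcal{D}$, the matrix $\mathcal{G}\,\Tilde{\mathcal{H}}^{\top}$ is right-invertible, where $\mathcal{G}$ is a generator matrix of $\mathcal{P}(C_1,C_2)$ and $\Tilde{\mathcal{H}}$ is a parity check matrix of $\Tilde{\mathcal{P}}(C_1',C_2')$. First I would fix generator matrices $G_1,G_2$ of $C_1,C_2$ and parity check matrices $H_1',H_2'$ of $C_1',C_2'$, use the block forms given just before the theorem,
\[
\mathcal{G}=\begin{pmatrix} G_1 & G_1 \\ 0 & G_2 \end{pmatrix},\qquad
\Tilde{\mathcal{H}}=\begin{pmatrix} H_1' & -H_1' \\ 0 & H_2' \end{pmatrix},
\]
and multiply out to obtain
\[
\mathcal{G}\,\Tilde{\mathcal{H}}^{\top}
=\begin{pmatrix} 0 & G_1(H_2')^{\top}\\ -\,G_2(H_1')^{\top} & G_2(H_2')^{\top}\end{pmatrix},
\]
the top-left block vanishing because $G_1(H_1')^{\top}-G_1(H_1')^{\top}=0$.

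Next I would invoke the hypothesis. Since $C_1,C_1'\in\mathcal{C}$ and $C_2,C_2'\in\mathcal{D}$ and $\{\mathcal{C},\mathcal{D}\}$ is an LCP of subspace codes, Definition~\ref{d-3.1} gives $C_1\cap C_2'=\{0\}$ and $C_1'\cap C_2=\{0\}$. Proposition~\ref{p-3} then shows that both $G_1(H_2')^{\top}$ and $G_2(H_1')^{\top}$ are right-invertible, hence of full row rank.

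Finally I would verify that $\mathcal{G}\,\Tilde{\mathcal{H}}^{\top}$ itself has full row rank. If a row vector $(\alpha,\beta)$, partitioned according to the two blocks of rows, satisfies $(\alpha,\beta)\,\mathcal{G}\,\Tilde{\mathcal{H}}^{\top}=0$, then the first block of columns gives $-\beta\,G_2(H_1')^{\top}=0$, so $\beta=0$ by full row rank of $G_2(H_1')^{\top}$; the second block of columns then gives $\alpha\,G_1(H_2')^{\top}=0$, so $\alpha=0$ by full row rank of $G_1(H_2')^{\top}$. Thus $\mathcal{G}\,\Tilde{\mathcal{H}}^{\top}$ is right-invertible, and Theorem~\ref{th-3.3} yields that the stated pair is an LCP of subspace codes.

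The one step I expect to cause friction is the last one: unlike in the preceding theorem the product $\mathcal{G}\,\Tilde{\mathcal{H}}^{\top}$ is not block-diagonal, so right-invertibility of the diagonal blocks does not suffice on its own and one must exploit the zero corner (or phrase it, as above, as a short rank computation). An alternative that sidesteps matrices altogether is the direct intersection argument: if $(u,u+v)=(a+b,b)$ with $u\in C_1$, $v\in C_2$, $a\in C_1'$, $b\in C_2'$, then $v=-a\in C_1'\cap C_2=\{0\}$, whence $u=b\in C_1\cap C_2'=\{0\}$; hence $\mathcal{P}(C_1,C_2)\cap\Tilde{\mathcal{P}}(C_1',C_2')=\{0\}$, which by Definition~\ref{d-3.1} is exactly what is required.
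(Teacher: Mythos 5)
Your proof is correct and follows the same basic route as the paper --- reduce to Theorem~\ref{th-3.3} via the block-matrix product of a generator matrix of one Plotkin sum with a parity-check matrix of the other, and invoke Proposition~\ref{p-3} --- but it is more careful than the paper's own argument in a way that matters. The paper only computes $\mathcal{G}_1\mathcal{H}_2^\top$ for the \emph{same} pair $(C_1,C_2)$ in both constructions, where $G_2H_2^\top=0$ makes the product anti-diagonal; but Definition~\ref{d-3.1} requires trivial intersection of $\mathcal{P}(C_1,C_2)$ with $\Tilde{\mathcal{P}}(C_1',C_2')$ for \emph{all} choices $C_1,C_1'\in\mathcal{C}$, $C_2,C_2'\in\mathcal{D}$, which is exactly the case you treat. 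Your observation that the resulting matrix then has a nonzero $(2,2)$ block $G_2(H_2')^\top$, so that right-invertibility of the two nonzero off-diagonal blocks alone does not finish the job, is accurate, and your short left-kernel computation exploiting the zero corner closes that gap cleanly. Your alternative direct argument ($v=-a\in C_1'\cap C_2$ forces $v=a=0$, then $u=b\in C_1\cap C_2'$ forces $u=b=0$) is genuinely different in flavour: it bypasses generator and parity-check matrices entirely, needs no appeal to Proposition~\ref{p-3} or Theorem~\ref{th-3.3}, and is arguably the most transparent way to see why the statement holds in full generality. Either version of your argument is acceptable; the matrix version stays closer to the paper's framework, while the direct one is shorter and makes the role of the hypothesis $C_i\cap D_j=\{0\}$ explicit.
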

\begin{proof}
 Let $\mathcal{G}_1$ and $\mathcal{H}_1$ denote a generator matrix and a parity-check matrix for $\mathcal{P}(C_1, C_2)$, and let $\mathcal{G}_2$ and $\mathcal{H}_2$ be the corresponding matrices for $\widetilde{\mathcal{P}}(C_1, C_2)$.
To establish the claim, it suffices to show that the matrix $\mathcal{G}_1 \mathcal{H}_2^\top$ is invertible.
For $i = 1,2$, let $G_i$ be a generator matrix of $C_i$ and $H_i$ a parity-check matrix of $C_i$. Therefore, we may write
  \[
\mathcal{G}_1=\begin{pmatrix} 
G_{1} & G_1 \\ 
0 & G_{2} 
\end{pmatrix} \text{ and } \mathcal{H}_2=\begin{pmatrix} 
H_{1} & -H_1 \\ 
0 & H_{2} 
\end{pmatrix}.
\]
Now, 
 \[
\mathcal{G}_1 \mathcal{H}_2^\top=\begin{pmatrix} 
G_{1} & G_1 \\ 
0 & G_{2} 
\end{pmatrix}  
\begin{pmatrix} 
H_{1}^\top & 0 \\ 
-H_{1}^\top & H_{2}^\top 
\end{pmatrix}=\begin{pmatrix} 
 0 & G_1H_{2}^\top \\ 
-G_2H_{1}^\top & 0 
\end{pmatrix}.
\]
Since $C_1 \cap C_2 = \{0\}$, by Proposition~\ref{p-3}, we find that the matrices $G_1H_2^\top$ and $H_2G_1^\top$ are invertible.
Thus, $\mathcal{G}_1\mathcal{H}_2^\top$ is right-invertible. Then by Theorem~\ref{th-3.3}, the desired result follows. 
\end{proof}
\subsection{LCP of subspace codes derived from $[u+v|\lambda u-\lambda v]$}
For two linear codes $C_1:=[n, k_1]$ and $C_2:=[n, k_2]$, define a linear code
$$
\mathcal{S}_{\lambda}(C_1, C_2):=\{(u+v,\lambda u-\lambda v)~|~u\in C_1, v\in C_2\}
.$$ 
Codes produced through the $[u+v|\lambda u-\lambda v]$-construction exhibit several notable structural advantages. In particular, when $C_1$ is a cyclic code and $C_2$ is a negacyclic code, then the code $\mathcal{S}_{1}(C_1, C_2)$ itself  becomes a cyclic code. Further details can be found in Theorem 8.1 in \cite{Hu00}.
Denote $\mathcal{S}(C_1, C_2)=\mathcal{S}_{1}(C_1, C_2)$.
 Let $C_1$ be linear code with generator matrix $G_1$ and 
parity-check matrix $H_1$, and let $C_2$ be linear code with
generator matrix $G_2$ and parity-check matrix $H_2$. Thus a generator matrix of $\mathcal{S}_{\lambda}(C_1, C_2)$ is
\[
\mathcal{G} =
\begin{pmatrix}
G_1 & \lambda G_1 \\
G_2 & -\lambda G_2
\end{pmatrix}
\] and a parity-check matrix of $\mathcal{S}_{\lambda}(C_1, C_2)$ is
\[
\mathcal{H} =
\begin{pmatrix}
H_1 & \lambda^{-1} H_1 \\
H_2 & -\lambda^{-1} H_2
\end{pmatrix}.
\]
\begin{thm}
 Let $\mathcal{C}_i~\mathcal{D}_i\subseteq \mathcal{P}_q(n)$ (with $q$ odd) be two subspace codes, for $i=1,2$. Suppose the pairs $\{\mathcal{C}_1, \mathcal{D}_1\}$ and  $\{\mathcal{C}_2, \mathcal{D}_2\}$  are LCPs of subspace codes. Then the pair $\{\{\mathcal{S}_{\lambda}(C_1, C_2)~|~C_1\in\mathcal{C}_1, C_2\in\mathcal{C}_2\}, \{\mathcal{S}_{\lambda}(D_1, D_2)~|~D_1\in\mathcal{D}_1, D_2\in\mathcal{D}_2\}\}$ forms an LCP of subspace codes.
\end{thm}
\begin{proof}
 Let $\mathcal{G}_1$ and $\mathcal{H}_1$ be, respectively, a generator matrix and a parity-check matrix for $\mathcal{S}_\lambda(C_1,C_2)$, and similarly let $\mathcal{G}_2$ and $\mathcal{H}_2$ correspond to $\mathcal{S}_\lambda(D_1,D_2)$.
It suffices to show that $\mathcal{G}_1 \mathcal{H}_2^\top$ is invertible.
Let $G_i$ be a generator matrix of $C_i$ and $H_i$ a parity-check matrix of $D_i$, for $i=1,2$. Therefore, we may write
  \[
\mathcal{G}_1=\begin{pmatrix} 
G_{1} & \lambda G_1 \\ 
G_2 & -\lambda G_{2} 
\end{pmatrix} \text{ and } \mathcal{H}_2=\begin{pmatrix} 
H_{1} & \lambda^{-1}H_1 \\ 
H_{2} & -\lambda^{-1} H_{2} 
\end{pmatrix}.
\]
Now, 
 \[
\mathcal{G}_1 \mathcal{H}_2^\top=\begin{pmatrix} 
G_{1} & \lambda G_1 \\ 
G_2 & -\lambda G_{2}
\end{pmatrix}  
\begin{pmatrix} 
H_{1}^\top & H_2^\top \\ 
\lambda^{-1}H_1^\top & -\lambda^{-1} H_{2}^\top 
\end{pmatrix}=\begin{pmatrix} 
2G_1H_{1}^\top & 0 \\ 
0 & 2G_2H_{2}^\top 
\end{pmatrix}.
\]
Since, $C_i\cap D_i=\{0\}$, then by Proposition~\ref{p-3}, each product $G_iH_i^\top$ is right-invertible. Since $q$ is odd, hence $2G_iH_i^\top$ is right-invertible for $i=1,2$. Therefore, $\mathcal{G}_1\mathcal{H}_2^\top$ is right-invertible. Thus, the result follows from Theorem~\ref{th-3.3}.
\end{proof}
\begin{thm}
 Let $\mathcal{C},~\mathcal{D}\subseteq \mathcal{P}_q(n)$ (with $q$ odd) be two subspace codes. Suppose the pair $\{\mathcal{C}, \mathcal{D}\}$ is an  LCP of subspace code. If $\lambda^2= -1$, then the pair $\{\{\mathcal{S}_\lambda(C_1, C_2^\perp)~|~C_1\in\mathcal{C}, C_2^\perp\in\mathcal{D}^\perp\}, \{\mathcal{S}_{\lambda}(C_1, C_2^\perp)^\perp~|~C_1\in\mathcal{C}, C_2^\perp\in\mathcal{D}^\perp\}\}$ is an LCP of subspace codes.
\end{thm}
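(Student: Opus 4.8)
The plan is to imitate the template used for the preceding $\mathcal{S}_\lambda$-constructions and reduce the claim to the matrix criterion of Theorem~\ref{th-3.3}. Fix $C_1,C_1'\in\mathcal{C}$ and $C_2,C_2'\in\mathcal{D}$ and put $E=\mathcal{S}_\lambda(C_1,C_2^\perp)$ and $E'=\mathcal{S}_\lambda(C_1',(C_2')^\perp)$. A parity-check matrix of $(E')^\perp$ is precisely a generator matrix $\mathcal{G}_{E'}$ of $E'$, so by Theorem~\ref{th-3.3} it suffices to prove that $\mathcal{G}_{E}\,\mathcal{G}_{E'}^\top$ is right-invertible for every such pair, where $\mathcal{G}_E$ is a generator matrix of $E$.

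First I would make the generator matrices explicit. Choosing a generator matrix $G_{C_1}$ of $C_1$ and noting that a parity-check matrix $H_{C_2}$ of $C_2$ serves as a generator matrix of $C_2^\perp$, the block form recorded just before the theorem gives
\[
\mathcal{G}_{E}=\begin{pmatrix}G_{C_1}&\lambda G_{C_1}\\ H_{C_2}&-\lambda H_{C_2}\end{pmatrix},\qquad
\mathcal{G}_{E'}=\begin{pmatrix}G_{C_1'}&\lambda G_{C_1'}\\ H_{C_2'}&-\lambda H_{C_2'}\end{pmatrix}.
\]
On multiplying, each diagonal block of $\mathcal{G}_{E}\mathcal{G}_{E'}^\top$ carries a factor $1+\lambda^2$ and each off-diagonal block a factor $1-\lambda^2$; substituting $\lambda^2=-1$ collapses the product to the anti-diagonal matrix
\[
\mathcal{G}_{E}\,\mathcal{G}_{E'}^\top=\begin{pmatrix}0&2\,G_{C_1}H_{C_2'}^\top\\ 2\,H_{C_2}G_{C_1'}^\top&0\end{pmatrix}.
\]

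To finish, I would invoke that $\{\mathcal{C},\mathcal{D}\}$ is an LCP, so $C_1\cap C_2'=\{0\}$ and $C_1'\cap C_2=\{0\}$; Proposition~\ref{p-3} then makes $G_{C_1}H_{C_2'}^\top$ and $G_{C_1'}H_{C_2}^\top$ right-invertible, and since $q$ is odd the scalar $2$ is a unit, so $2\,G_{C_1}H_{C_2'}^\top$ is right-invertible while $2\,H_{C_2}G_{C_1'}^\top=(2\,G_{C_1'}H_{C_2}^\top)^\top$ has full column rank. In the dimension-matched case, namely when $\dim C_1+\dim C_2=n$ for the pairs in play (automatic for a constant-dimension $\mathcal{C}$, and the condition of Theorem~\ref{th-3.2}), both blocks are square and hence invertible, the anti-diagonal matrix is invertible, and Theorem~\ref{th-3.3} delivers the conclusion for the pair $\{\mathcal{E},\mathcal{E}^\perp\}$ with $\mathcal{E}=\{\mathcal{S}_\lambda(C_1,C_2^\perp)\mid C_1\in\mathcal{C},\,C_2^\perp\in\mathcal{D}^\perp\}$.

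The step I expect to need the most care is certifying right-invertibility of that anti-diagonal matrix: one must check that its two nonzero blocks are genuinely square, which is exactly the dimension bookkeeping just mentioned; the vanishing of the diagonal blocks (forced by $\lambda^2=-1$) and the invertibility of $2$ (forced by $q$ odd) are the two places where the hypotheses are used, and are what make $\mathcal{S}_\lambda(C_1,C_2^\perp)$ behave like an LCD code. As a sanity check I would also record the identity $\mathcal{S}_\lambda(A,B)^\perp=\mathcal{S}_{\lambda^{-1}}(A^\perp,B^\perp)$, which for $\lambda^2=-1$ reads $\mathcal{S}_\lambda(A,B)^\perp=\mathcal{S}_{-\lambda}(A^\perp,B^\perp)$; this suggests a second, element-wise proof in which equating a word $(u+v,\lambda(u-v))$ of $E$ with a word $(u'+v',-\lambda(u'-v'))$ of $(E')^\perp$ forces $u=v'$ and $v=u'$, placing $u$ in an intersection $C\cap D$ with $C\in\mathcal{C}$, $D\in\mathcal{D}$ and $v$ in $C^\perp\cap D^\perp=(C+D)^\perp$ for (possibly different) such $C,D$, both trivial once the dimension condition upgrades the LCP hypothesis $C\cap D=\{0\}$ to $C\oplus D=\mathbb{F}_q^n$.
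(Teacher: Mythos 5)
Your proposal is correct and follows essentially the same route as the paper: write the block generator matrix of $\mathcal{S}_\lambda(C_1,C_2^\perp)$ using $G_{C_1}$ and $H_{C_2}$, compute the product with the transposed generator matrix, observe that $\lambda^2=-1$ kills the diagonal blocks and leaves off-diagonal blocks $2G_{C_1}H_{C_2'}^\top$ and $2H_{C_2}G_{C_1'}^\top$, then apply Proposition~\ref{p-3} (using $C_1\cap C_2'=\{0\}$ from the LCP hypothesis and $q$ odd) together with Theorem~\ref{th-3.3}. You are in fact somewhat more careful than the paper, which only treats the matched pair $\mathcal{G}\mathcal{G}^\top$ and does not flag the dimension bookkeeping needed for the anti-diagonal blocks to be square, both of which you handle explicitly.
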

\begin{proof}
  Let $\mathcal{G}$ denote a generator matrix for $\mathcal{S}_{\lambda}(C_1, C_2^\perp)$. Note that $\mathcal{S}_{\lambda}(C_1, C_2^\perp)^\perp$ is the dual of $\mathcal{S}_{\lambda}(C_1, C_2^\perp)$.
To ensure the desired LCP property, it is enough to verify that $\mathcal{G}\mathcal{G}^\top$ is right-invertible.
Let $G_1$ be a generator matrix of $C_1$ and let $H_2$ be a parity-check matrix of $C_2$. Then we may write
  \[
\mathcal{G}=\begin{pmatrix} 
G_{1} & \lambda G_1 \\ 
H_2 & -\lambda H_{2} 
\end{pmatrix} .
\]
Now, 
 \[
\mathcal{G} \mathcal{G}^\top=\begin{pmatrix} 
G_{1} & \lambda G_1 \\ 
H_2 & -\lambda H_{2} 
\end{pmatrix}  
\begin{pmatrix} 
G_{1}^\top & H_2^\top \\ 
\lambda G_1^\top & -\lambda H_{2}^\top 
\end{pmatrix}=\begin{pmatrix} 
 (1+\lambda^2)G_1G_{1}^\top & (1-\lambda^2)G_1H_{2}^\top\\ 
 (1-\lambda^2)H_{2}G_1^\top & (1+\lambda^2)H_2H_{2}^\top 
\end{pmatrix}.
\]
Since $C_1 \cap C_2 = \{0\}$, by Proposition~\ref{p-3} we find that both matrices $G_1H_2^\top$ and $H_{2}G_1^\top$ are right-invertible. 
Moreover, because $\lambda^2 = -1$, it follows from Theorem~\ref{th-3.3} that $\mathcal{G}\mathcal{G}^\top$ is right-invertible. Hence, the pair  $\{\{\mathcal{S}_\lambda(C_1, C_2^\perp)~|~C_1\in\mathcal{C}, C_2^\perp\in\mathcal{D}^\perp\}, \{\mathcal{S}_{\lambda}(C_1, C_2^\perp)^\perp~|~C_1\in\mathcal{C}, C_2^\perp\in\mathcal{D}^\perp\}\}$ forms an LCP of subspace codes, as desired.
\end{proof}
\begin{example}
Let $q = 5$ and choose $\lambda = 2 \in \mathbb{F}_5$. Since $2^2 = 4 \equiv -1 \pmod{5}$, we have
\[
x^6 - 1 = (x-1)(x+1)(x^2+x+1)(x^2-x+1),
\]
and
\[
x^6 - 4 = (x-2)(x-3)(x^2+2x+4)(x^2+3x+4).
\]
Define the sets
\[
\mathcal{C}=\{\langle (x+1)(x^2+x+1)\rangle, \langle (x-1)(x^2-x+1)\rangle\}
\]
and
\[
\mathcal{D}=\{\langle (x-2)(x^2+3x+4)\rangle, \langle (x-3)(x^2+2x+4)\rangle\}.
\]
It is straightforward to verify that the pair $\{\mathcal{C}, \mathcal{D}\}$ forms an LCP of subspace codes.

Consequently, the pair
$\{\{\mathcal{S}_2(C_1, C_2^\perp)~|~C_1\in\mathcal{C}, C_2^\perp\in\mathcal{D}^\perp\}, \{\mathcal{S}_{2}(C_1, C_2^\perp)^\perp~|~C_1\in\mathcal{C}, C_2^\perp\in\mathcal{D}^\perp\}\}$
is also an LCP of subspace codes.
\end{example}
\begin{thm}
 Let $\mathcal{C}\subseteq \mathcal{P}_q(n)$ (with $q$ odd) be a subspace code. Suppose $\mathcal{C}$ is an LCD subspace code. If $\lambda^2= 1$, then the pair $\{\{\mathcal{S}_\lambda(C_1, C_2)~|~C_1,C_2\in\mathcal{C}\}, \{\mathcal{S}_{\lambda}(C_1, C_2)^\perp~|~C_1, C_2\in\mathcal{C}\}\}$ forms an LCP of subspace codes.   
\end{thm}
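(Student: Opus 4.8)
The plan is to reduce the claim to a single matrix computation via Theorem~\ref{th-3.3}. Fix $C_1,C_2\in\mathcal{C}$ with generator matrices $G_1,G_2$. Since a parity-check matrix of $\mathcal{S}_{\lambda}(C_1,C_2)^{\perp}$ is precisely a generator matrix of $\mathcal{S}_{\lambda}(C_1,C_2)$, Theorem~\ref{th-3.3} tells us that it suffices to show $\mathcal{G}\mathcal{G}^{\top}$ is right-invertible, where
\[
\mathcal{G}=\begin{pmatrix} G_1 & \lambda G_1 \\ G_2 & -\lambda G_2 \end{pmatrix}
\]
is the generator matrix of $\mathcal{S}_{\lambda}(C_1,C_2)$ recorded in the excerpt. (As in the preceding theorem for $\lambda^2=-1$, I would present the index-matched pair; the identical computation with four codes $C_1,C_2,C_3,C_4\in\mathcal{C}$ goes through verbatim and is discussed at the end.)

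Next I would carry out the block multiplication
\[
\mathcal{G}\mathcal{G}^{\top}=\begin{pmatrix} G_1 & \lambda G_1 \\ G_2 & -\lambda G_2 \end{pmatrix}\begin{pmatrix} G_1^{\top} & G_2^{\top} \\ \lambda G_1^{\top} & -\lambda G_2^{\top} \end{pmatrix}=\begin{pmatrix} (1+\lambda^{2})G_1G_1^{\top} & (1-\lambda^{2})G_1G_2^{\top} \\ (1-\lambda^{2})G_2G_1^{\top} & (1+\lambda^{2})G_2G_2^{\top} \end{pmatrix}.
\]
Imposing $\lambda^{2}=1$ kills the off-diagonal blocks and collapses this to the block-diagonal matrix with diagonal blocks $2G_1G_1^{\top}$ and $2G_2G_2^{\top}$. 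Now, because $\mathcal{C}$ is an LCD subspace code, the pair $\{\mathcal{C},\mathcal{C}^{\perp}\}$ is an LCP of subspace codes, so in particular $C_i\cap C_i^{\perp}=\{0\}$ for each $i$. Applying Proposition~\ref{p-3} to the pair $C_i,C_i^{\perp}$ — a parity-check matrix of $C_i^{\perp}$ being a generator matrix $G_i$ of $C_i$ — shows that $G_iG_i^{\top}$ is right-invertible, hence invertible since it is square. As $q$ is odd, $2$ is a unit in $\mathbb{F}_q$, so $2G_iG_i^{\top}$ is invertible for $i=1,2$, and therefore $\mathcal{G}\mathcal{G}^{\top}$ is invertible, in particular right-invertible. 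Theorem~\ref{th-3.3} then delivers the conclusion.

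I do not expect a genuine obstacle here; the work is a direct block computation feeding into Proposition~\ref{p-3} and Theorem~\ref{th-3.3}. The one thing that must be handled with care is the interplay of the two hypotheses: $q$ odd is exactly what keeps $2G_iG_i^{\top}$ invertible, while $\lambda^{2}=1$ is exactly what annihilates the cross-terms $(1-\lambda^{2})G_1G_2^{\top}$ and $(1-\lambda^{2})G_2G_1^{\top}$ so that the product is block-diagonal (for $\lambda^2=-1$ one is in the situation of the previous theorem, and for generic $\lambda$ right-invertibility can fail). Finally, to obtain the statement for the full families rather than only index-matched pairs, one replaces $G_1G_1^{\top},G_2G_2^{\top}$ by $G_1G_3^{\top},G_2G_4^{\top}$ and invokes the LCD property of $\mathcal{C}$ in the stronger form $C_i\cap C_j^{\perp}=\{0\}$ for all $i,j$, which again makes each diagonal block $2G_iG_j^{\top}$ right-invertible by Proposition~\ref{p-3}.
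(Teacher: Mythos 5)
Your proof is correct and follows essentially the same route as the paper's: the same block computation of $\mathcal{G}\mathcal{G}^{\top}$, the same use of Proposition~\ref{p-3} on $C_i\cap C_i^{\perp}=\{0\}$ to get each diagonal block $2G_iG_i^{\top}$ right-invertible with $q$ odd, and the same appeal to Theorem~\ref{th-3.3}. Your closing remark handling the non-index-matched pairs $\mathcal{S}_{\lambda}(C_1,C_2)$ versus $\mathcal{S}_{\lambda}(C_3,C_4)^{\perp}$ is a welcome extra step that the paper's proof omits (it only verifies matched indices), and it goes through exactly as you describe because the paper's definition of an LCD subspace code already demands $C_i\cap C_j^{\perp}=\{0\}$ for all $i,j$.
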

\begin{proof}
  Let $\mathcal{G}$ denote a generator matrix for $\mathcal{S}_{\lambda}(C_1, C_2)$. Note that $\mathcal{S}_{\lambda}(C_1, C_2)^\perp$ is the dual of $\mathcal{S}_{\lambda}(C_1, C_2)$.
To ensure the desired LCP property, it is enough to verify that $\mathcal{G}\mathcal{G}^\top$ is right-invertible.
Let $G_i$ be a generator matrix of $C_i$ for $i=1,2$. Then we may write
  \[
\mathcal{G}=\begin{pmatrix} 
G_{1} & \lambda G_1 \\ 
G_2 & -\lambda G_{2} 
\end{pmatrix} .
\]
Now, 
 \[
\mathcal{G} \mathcal{G}^\top=\begin{pmatrix} 
G_{1} & \lambda G_1 \\ 
G_2 & -\lambda G_{2} 
\end{pmatrix}  
\begin{pmatrix} 
G_{1}^\top & G_2^\top \\ 
\lambda G_1^\top & -\lambda G_{2}^\top 
\end{pmatrix}=\begin{pmatrix} 
 (1+\lambda^2)G_1G_{1}^\top & (1-\lambda^2)G_1G_{2}^\top\\ 
 (1-\lambda^2)G_2G_{1}^\top & (1+\lambda^2)G_2G_{2}^\top 
\end{pmatrix}.
\]
Because $\mathcal{C}$ is LCD, so each $C_i\in\mathcal{C}$ satisfies $C_i \cap C^\perp_i = \{0\}$, and therefore $G_iG_i^\top$ is right-invertible, for $i=1,2$.
Since $q$ is odd and $\lambda^2 = 1$, we have $1+\lambda^2=2$, it follows from Proposition~\ref{p-3} that $\mathcal{G}\mathcal{G}^\top$ is right-invertible. Hence, the  considered pair forms an LCP of subspace codes, as desired.
\end{proof}
\begin{example}
Let $q = 5$ and choose $\lambda = 4 \in \mathbb{F}_q $. Observe that $4^2 = 16 \equiv 1 \pmod{5}$. The polynomial $x^{6} - 4$ factors over $\mathbb{F}_5$ as
\[
x^{6} - 4 = (x-2)(x-3)(x^{2}+2x+4)(x^{2}+2x+4).
\]
Consider the collection of subspaces
$\mathcal{C}=\{\langle x^{2}+x+1\rangle,\ \langle x^{2}-x+1\rangle\}.$
It is straightforward to verify that $\mathcal{C}$ forms an LCD subspace code. Moreover,
the pair $\{\{\mathcal{S}_4(C_1, C_2)~|~C_1,C_2\in\mathcal{C}\}, \{\mathcal{S}_{4}(C_1, C_2)^\perp~|~C_1, C_2\in\mathcal{C}\}\}$ 
constitutes an LCP of subspace codes.
\end{example}
\subsection{LCP of subspace codes derived from $k$-spread}
Let $k < n$. A $k$-spread of the vector space $\mathbb{F}_{q}^{n}$ is defined as a set of $k$-dimensional subspaces
$\{X_{1}, X_{2}, \ldots, X_{t}\}$ such that
\begin{enumerate}
\item[a)] each pair of distinct subspaces intersects only in the zero vector, i.e.,
$X_{i} \cap X_{j} = \{0\}, \text{ for all } i \neq j,$
\item[b)] the collection covers the entire space, i.e.,
$\bigcup_{i=1}^{t} X_{i} = \mathbb{F}_{q}^{n}.$
\end{enumerate}
A family satisfying these properties is called a $k$-spread of $\mathbb{F}_{q}^{n}$.
 \begin{thm}\cite[Theorem 5.7]{Hi98}\label{th-4.111}
   A $k$-spread of $\mathbb{F}_{q}^{n}$ exists if and only if $k$ is a divisor of $n$.  
 \end{thm}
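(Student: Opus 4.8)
The plan is to prove the two directions separately: the forward implication by a counting argument, and the converse by an explicit field-theoretic construction. For the forward direction, suppose $\{X_1,\dots,X_t\}$ is a $k$-spread of $\mathbb{F}_q^n$. Each $X_i$ is a $k$-dimensional $\mathbb{F}_q$-subspace, so it has exactly $q^k-1$ nonzero vectors, and by property (a) the punctured subspaces $X_i\setminus\{0\}$ are pairwise disjoint; property (b) then forces $q^n-1=t(q^k-1)$, so $\frac{q^n-1}{q^k-1}$ must be an integer. First I would isolate the elementary lemma that $q^k-1\mid q^n-1$ if and only if $k\mid n$: writing $n=ak+r$ with $0\le r<k$ gives $q^n\equiv q^r\pmod{q^k-1}$, hence $q^k-1\mid q^r-1$, which since $0\le q^r-1<q^k-1$ forces $r=0$, i.e. $k\mid n$. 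This settles the "only if" part.

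For the converse, assume $k\mid n$ and write $n=km$. I would fix an identification of $\mathbb{F}_q^n$ with the field $\mathbb{F}_{q^n}$ as $\mathbb{F}_q$-vector spaces, and use that $k\mid n$ guarantees a subfield $\mathbb{F}_{q^k}\subseteq\mathbb{F}_{q^n}$. For $\beta\in\mathbb{F}_{q^n}^{*}$, the set $\beta\mathbb{F}_{q^k}=\{\beta x \mid x\in\mathbb{F}_{q^k}\}$ is a $k$-dimensional $\mathbb{F}_q$-subspace of $\mathbb{F}_{q^n}$, since multiplication by $\beta$ is an $\mathbb{F}_q$-linear bijection. The plan is to show that the distinct subspaces of this form constitute the desired $k$-spread. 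Here $\beta_1\mathbb{F}_{q^k}=\beta_2\mathbb{F}_{q^k}$ precisely when $\beta_1\beta_2^{-1}\in\mathbb{F}_{q^k}^{*}$, so if two such subspaces are distinct and share a nonzero element $\gamma$, writing $\gamma=\beta_1 a=\beta_2 b$ with $a,b\in\mathbb{F}_{q^k}^{*}$ yields $\beta_1\beta_2^{-1}=ba^{-1}\in\mathbb{F}_{q^k}^{*}$, a contradiction; hence distinct members meet only in $0$, giving property (a). Property (b) is immediate, since any $\gamma\in\mathbb{F}_{q^n}^{*}$ lies in $\gamma\mathbb{F}_{q^k}$ and $0$ lies in every member. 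Thus $\{\beta\mathbb{F}_{q^k}\}$, which is indexed by the cosets of $\mathbb{F}_{q^k}^{*}$ in $\mathbb{F}_{q^n}^{*}$ and therefore has $\frac{q^n-1}{q^k-1}$ members (consistent with the count above), is a $k$-spread.

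The two halves have quite different flavours. The forward direction is just the divisibility lemma wrapped in a one-line count, so the only care needed there is a clean statement of the lemma. The real content is the converse: the key insight is that a $k$-spread is exactly the (Desarguesian) partition of $\mathbb{F}_{q^n}$ into one-dimensional $\mathbb{F}_{q^k}$-subspaces, which is available precisely when $\mathbb{F}_{q^k}$ embeds in $\mathbb{F}_{q^n}$, i.e. when $k\mid n$. I expect the main obstacle to be expository rather than mathematical — being careful with the identification $\mathbb{F}_q^n\cong\mathbb{F}_{q^n}$ and the coset bookkeeping for the subspaces $\beta\mathbb{F}_{q^k}$ — since no deeper structure theory is required.
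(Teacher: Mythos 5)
This statement is quoted from Hirschfeld \cite[Theorem 5.7]{Hi98} and the paper gives no proof of it, so there is nothing internal to compare your argument against. Your proof is correct and is the standard one: the ``only if'' direction follows from the count $q^n-1=t(q^k-1)$ together with the elementary fact that $q^k-1\mid q^n-1$ iff $k\mid n$, and the ``if'' direction is the Desarguesian spread obtained by identifying $\mathbb{F}_q^n$ with $\mathbb{F}_{q^n}$ and partitioning $\mathbb{F}_{q^n}^{*}$ into the multiplicative cosets $\beta\mathbb{F}_{q^k}^{*}$. Both halves are carried out cleanly and I see no gaps.
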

\begin{thm}
 Let $\{U_{1}, U_{2}, \ldots, U_{t}\}$ be a k-spread of $\mathbb{F}_q^{n}$ with $n=2k$ and $t\geq 4$. If $\mathcal{C}_s=\{U_1,\ldots, U_s\}$ and  $\mathcal{D}_s=\{U_{s+1},\ldots, U_t\}$, for $2\leq s \leq t-2$, then the pair $\{\mathcal{C}_s, \mathcal{D}_s\}$ forms an LCP of subspace codes.  
\end{thm}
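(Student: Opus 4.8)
The plan is to reduce the statement directly to the defining properties of a $k$-spread together with Definition~\ref{d-3.1}. First I would observe that, since $n=2k$, every member $U_i$ of the spread has dimension $k$, and consequently $\dim(U_i^\perp)=n-k=k$ as well; therefore the collection $\mathcal{D}_s=\{U_{s+1},\ldots,U_t\}$, which consists entirely of $k$-dimensional subspaces, is an admissible choice for the companion code $\mathcal{D}$ attached to $\mathcal{C}_s$ in the sense of \eqref{eq-1}. I would also record that the restriction $2\le s\le t-2$ (which is nonvacuous precisely because $t\ge 4$) guarantees that both $\mathcal{C}_s$ and $\mathcal{D}_s$ contain at least two distinct subspaces, so that each is a genuine subspace code rather than a degenerate collection.

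The core of the argument is then a one-line verification of the intersection condition in Definition~\ref{d-3.1}. Take any $C_i\in\mathcal{C}_s$ and $D_j\in\mathcal{D}_s$; by construction $C_i=U_a$ for some index $a$ with $1\le a\le s$ and $D_j=U_b$ for some index $b$ with $s+1\le b\le t$. Since the two index ranges are disjoint, $a\ne b$, and property (a) in the definition of a $k$-spread yields $U_a\cap U_b=\{0\}$. Because this holds for every such pair $(C_i,D_j)$, Definition~\ref{d-3.1} gives that $\{\mathcal{C}_s,\mathcal{D}_s\}$ is an LCP of subspace codes.

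As an optional strengthening, I would note that $\dim(C_i)+\dim(D_j)=k+k=n$ for all $C_i\in\mathcal{C}_s$ and $D_j\in\mathcal{D}_s$, so Theorem~\ref{th-3.2} immediately upgrades the conclusion: $\{\mathcal{C}_s^\perp,\mathcal{D}_s^\perp\}$ is an LCP of subspace codes as well, and one may alternatively invoke Theorem~\ref{th-4.111} to justify the existence of the $k$-spread under the hypothesis $k\mid n$ (here $n=2k$).

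Regarding the main obstacle: there is essentially none, since the statement is a direct consequence of the spread axioms. The only points requiring a moment of care are the bookkeeping checks—namely, that $\mathcal{D}_s$ legitimately qualifies as the code $\mathcal{D}$ of \eqref{eq-1} (this is exactly where the hypothesis $n=2k$ is used, to make the dimensions match $\dim(C_i^\perp)=k$) and that both halves of the partition are nontrivial subspace codes (this is where $t\ge 4$ and $2\le s\le t-2$ enter). No genuine difficulty arises beyond these.
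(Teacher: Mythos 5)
Your proposal is correct and follows essentially the same route as the paper: the intersection condition of Definition~\ref{d-3.1} is read off directly from the spread axiom that distinct members of a $k$-spread meet only in $\{0\}$. The extra bookkeeping you supply (that $n=2k$ makes $\mathcal{D}_s$ an admissible companion in the sense of \eqref{eq-1}, and that $t\ge 4$ with $2\le s\le t-2$ keeps both halves nontrivial) is a welcome tightening of details the paper leaves implicit, but it does not change the argument.
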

\begin{proof} 
By Theorem \ref{th-4.111}, a $k$-spread of $\mathbb{F}_q^{n}$ exists whenever $k\mid n$; in particular, for $n=2k$ we may fix a $k$-spread
\[
\{U_{1}, U_{2},\ldots, U_{t}\}.
\]
Choose any $1\le s \le t-1$, define
\[
\mathcal{C}_s=\{U_1,\ldots,U_s\} \text{ and }
\mathcal{D}_s=\{U_{s+1},\ldots,U_t\}.
\]
To show that $\{\mathcal{C}_s,\mathcal{D}_s\}$ is an LCP of subspace codes, it suffices to verify that
\[
C_i \cap D_j = \{0\} \quad \text{for all } C_i \in \mathcal{C}_s,  D_j \in \mathcal{D}_s.
\]
This follows immediately from the defining property of a $k$-spread. Hence, the proof is complete.
\end{proof}

\begin{example}
Let $q=5$, $n=6$, and $k=3$. Since $3 \mid 6$, so $3$-spread of $ \mathbb{F}_5^6$  exists. Moreover, it is well known that
\[
\mathbb{F}_5^6 \cong \mathbb{F}_{5^3} \times \mathbb{F}_{5^3}.
\] as $\mathbb{F}_5$-vector space.
For each $a \in \mathbb{F}_{5^3}$, define
\[
U_a = \{(x, ax) : x \in \mathbb{F}_{5^3}\} \text{ and }
U_\infty = \{(0, x) : x \in \mathbb{F}_{5^3}\}.
\]
Then each $U_a$ is a $3$-dimensional $\mathbb{F}_5$-subspace of $\mathbb{F}_5^6$, any two distinct subspaces intersect trivially, and every nonzero vector of $\mathbb{F}_5^6$ lies in exactly one of them. Hence
\[
\mathcal{S} = \{U_a : a \in \mathbb{F}_{5^3}\} \cup {U_\infty}
\]
is a $3$-spread of $\mathbb{F}_5^6$, consisting of $5^3 + 1 = 126$ subspaces. For each $2 \le s \le 124$, define
\[
\mathcal{C}_s = \{U_1, \ldots, U_s\} \text{ and }
\mathcal{D}_s = \{U_{s+1}, \ldots, U_{126}\}.
\]
Hence, for every $s$, the pair $\{\mathcal{C}_s, \mathcal{D}_s\}$ constitutes an LCP of subspace codes.
\end{example}
\begin{example}   
Consider the \( 2n \)-dimensional vector space  over \( \mathbb{F}_q \), i.e. 
\[
V = \{(x, y) : x, y \in \mathbb{F}^n\}.
\]
Let \( M_0, M_1, \dots, M_{q^n-1} \) be \( n \times n \) matrices over \( \mathbb{F} \) such that \( M_i - M_j \) is invertible for all \( i \neq j \). 

For each \( i = 0, 1, \dots, q^n-1 \), define the subspace  
\[
U_i = \{(x, x M_i) : x \in \mathbb{F}^n\},
\]  
which can be viewed as the graph of the linear map \( x \mapsto x M_i \).
Also define the subspace  
\[
U_{q^n} = \{(0, y) : y \in \mathbb{F}^n\}.
\]
Then the family  
\[
\mathcal{S} = \{U_0, U_1, \dots, U_{q^n}\}
\]  
forms a $k$-spread in \( V \) (see \cite[Theorem 3.4]{EM}). Note that \( U_i \)’s are \( n \)-dimensional subspaces. 
 For each $1 \le s \le q^n-2$, define
\[
\mathcal{C}_s = \{U_0, \ldots, U_s\} \text{ and }
\mathcal{D}_s = \{U_{s+1}, \ldots, U_{q^n}\}.
\]
Hence, for every $s$, the pair $\{\mathcal{C}_s, \mathcal{D}_s\}$ constitutes an LCP of subspace codes.   
\end{example}
\section{Applications to insertion error correction}\label{app-6}
In network coding, information is transmitted through subspaces. Let \(\{\mathcal{C}, \mathcal{D}\}\) be a linear complementary pair (LCP) of subspace codes. Specifically, suppose that \(\mathcal{C}\) and \(\mathcal{D}\) consist of  \(k\)-dimensional subspaces of \(\mathbb{F}_q^n\) such that $\{\mathcal{C}, \mathcal{D}\}$ forms a $k$-spread of $\mathbb{F}_q^n$. This complementary structure is crucial for error correction.

Assume a codeword \(C=\langle S\rangle \in \mathcal{C}\) is sent in a network and an insertion error \(E=\langle T \rangle\) (one-dimensional) occurs. The receiver then obtains the subspace  
\[
R =\langle S , T \rangle=C+E, \text{ at most a $(k+1)$- dimensional subspace}.
\]
\begin{itemize}
    \item \textbf{Detection:} Choose any \(D \in \mathcal{D}\). If \(R \cap D \neq \{0\}\), an insertion error is detected (since \(C \cap D = \{0\}\) for all \(C \in \mathcal{C}\) and \(D \in \mathcal{D}\)).
    \item \textbf{Correction:}
    \begin{itemize}
        \item[1.] If \(E \subseteq D\) for some \(D \in \mathcal{D}\), then \(R \cap D = E\). Removing \(E\) from \(R\) recovers \(C\).
        \item[2.] If \(E \not\subseteq D\) for every \(D \in \mathcal{D}\), find the unique \(C' \in \mathcal{C}\) such that \(C' \subseteq R\). Then \(C' = C\), and \(E\) is any complement of \(C\) in \(R\). The uniqueness of $C'$ follows from the property of $k$-spread of $\mathbb{F}_q^n$.
    \end{itemize}
\end{itemize}

Moreover, the structure of \(\mathcal{D}\) enables the determination of the error \(E\) and the recovery of the original codeword \(C\).
\begin{example}\label{ex-6.1} 
Let $q=2$, $k=2$, and $n=4$. 
 Let \(\mathbb{F}_2^4\) have basis \(\{e_1,e_2,e_3,e_4\}\). Consider the following 2‑dimensional subspaces of $\mathbb{F}_2^4$
\[
\begin{aligned}
U_1&=\langle e_1,e_2\rangle, &
U_2&=\langle e_3,e_4\rangle, &
U_3&=\langle e_1+e_3,\; e_2+e_4\rangle,\\
U_4&=\langle e_1+e_4,\; e_2+e_3+e_4\rangle, &
U_5&=\langle e_1+e_2+e_3,\; e_2+e_4\rangle .
\end{aligned}
\]
It is straightforward to verify that 
\[
\mathcal{S} = \{U_1, U_2, U_3, U_4, U_5\}
\]
forms a $2$-spread of $\mathbb{F}_2^{4}$. Further, we consider two subspace codes  
$\mathcal{C}=\{U_1,U_2,U_3\} \text{ and } 
\mathcal{D}=\{U_4,U_5\}.$ 
For every \(C\in\mathcal{C}\) and \(D\in\mathcal{D}\) we have \(C\cap D=\{0\}\). Thus \((\mathcal{C},\mathcal{D})\) is an LCP of subspace codes.
Suppose \(C=U_1\) is transmitted and an insertion error \(E=\langle e_1+e_4\rangle\) occurs, where \(E\subset U_4\). The received subspace is \(R=C\oplus E=\langle e_1,e_2,e_1+e_4\rangle\).
 Choose \(D=U_4\in\mathcal{D}\). Since \(C\cap D=\{0\}\) but \(R\cap D=E\neq\{0\}\), a nonzero intersection signals the presence of an insertion.
 Because \(E\subset D\), we have \(R\cap D=E\). Deleting the error space \(E\) from \(R\) yields \(C\), indeed \(\langle e_1,e_2\rangle\oplus E = R\).
Thus, identify \(E\) and recover the transmitted \(C\).   
\end{example}

\begin{example}
As in Example~\ref{ex-6.1}, consider the subspace codes
$\mathcal{C} = \{U_1, U_2, U_3\}, \text{ and } \mathcal{D} = \{U_4, U_5\},$
which form an LCP because \( C \cap D = \{0\} \) for every \(C \in \mathcal{C}\) and \(D \in \mathcal{D}\).
Suppose \(C = U_3\) is transmitted and an insertion error \(E = \langle e_1 \rangle\) occurs. Notice that \(E\) is not contained in \(U_4\) or \(U_5\).  
The received subspace is
$R = C \oplus E = \langle e_1+e_3,\; e_2+e_4,\; e_1 \rangle = \langle e_1,\, e_3,\, e_2+e_4 \rangle.$
 Take \(D = U_4 \in \mathcal{D}\). Since \(C \cap D = \{0\}\) but
$R \cap D = \langle e_2+e_3+e_4 \rangle \neq \{0\},$
the nonzero intersection detects an insertion.
 Here \(E \not\subseteq D\) for all \(D \in \mathcal{D}\).  
We find the unique codeword in \(\mathcal{C}\) contained in \(R\).  
Indeed \(U_3 = \langle e_1+e_3,\; e_2+e_4 \rangle \subseteq R\), while \(U_1,U_2 \not\subseteq R\).  
Thus \(C' = U_3 = C\).
\end{example}

\section{Conclusion}\label{con-7}
In this paper, we introduced the concept of LCPs of subspace codes and provided a characterization of such codes. We established necessary and sufficient conditions for the existence of LCPs in terms of generator matrices and subspace distances. Furthermore, we demonstrated the equivalence between LCPs of codes and their duals under suitable conditions. Several explicit constructions were presented, including techniques based on classical $[u|u+v]$-type constructions and $k$-spreads. These results extend the theory of complementary pairs of codes to the subspace coding framework and open new directions for applications in network coding and secure communications. In future work, it is of interest to consider nonempty fixed intersections of two subspace codes.
\section*{Acknowledgments}
The author extend their sincere gratitude to Dr. Satya Bagchi and Dr. Kuntal Deka for their meticulous proofreading of the manuscript and for offering invaluable insights and suggestions that greatly enhanced the quality of this work.

\end{document}